\documentclass[twocolumn]{IEEEtran}

\newcommand{\bzero}{\mbox{\boldmath{$0$}}}

\newcommand{\bA}{\mbox{\boldmath{$A$}}}
\newcommand{\ba}{\mbox{\boldmath{$a$}}}

\newcommand{\bc}{\mbox{\boldmath{$c$}}}

\newcommand{\ee}{\end{equation}}

\newcommand{\bI}{\mbox{\boldmath{$I$}}}

\newcommand{\bJ}{\mbox{\boldmath{$J$}}}

\newcommand{\bM}{\mbox{\boldmath{$M$}}}

\newcommand{\bn}{\mbox{\boldmath{$n$}}}
\newcommand{\bP}{\mbox{\boldmath{$P$}}}
\newcommand{\bp}{\mbox{\boldmath{$p$}}}
\newcommand{\bQ}{\mbox{\boldmath{$Q$}}}
\newcommand{\bq}{\mbox{\boldmath{$q$}}}
\newcommand{\bR}{\mbox{\boldmath{$R$}}}

\newcommand{\bs}{\mbox{\boldmath{$s$}}}

\newcommand{\bv}{\mbox{\boldmath{$v$}}}

\newcommand{\bw}{\mbox{\boldmath{$w$}}}
\newcommand{\bX}{\mbox{\boldmath{$X$}}}
\newcommand{\bx}{\mbox{\boldmath{$x$}}}

\newcommand{\by}{\mbox{\boldmath{$y$}}}

\newcommand{\bz}{\mbox{\boldmath{$z$}}}

\newcommand{\tr}{\mbox{\rm tr}\, }

\newcommand{\diag}{\mbox{\boldmath\bf diag}\, }
\newcommand{\Diag}{\mbox{\boldmath\bf Diag}\, }

\def\E{{\mathbb E}}

\usepackage[overload]{empheq}
\usepackage{fancyhdr}
\usepackage[colorlinks]{hyperref}

\usepackage{amsthm}
\usepackage[T1]{fontenc}
\usepackage[latin9]{inputenc}
\usepackage{bm}
\usepackage{amsmath}
\usepackage{amssymb}
\usepackage{amsfonts}
\usepackage{subfigure}
\usepackage{mathrsfs}
\usepackage{pifont}
\usepackage{amssymb}
\usepackage{verbatim}
\usepackage{upgreek}
\usepackage{color}
\usepackage[final]{graphicx}
\usepackage{epsfig}
\usepackage{booktabs}
\usepackage{bm}
\usepackage{setspace}
\usepackage{url}
\usepackage{algorithmic}
\usepackage{algorithm}
\usepackage{cite}
\usepackage{pifont}
\usepackage{bm}
\usepackage{setspace}
\usepackage{url}

\newtheorem{corollary}{Corollary}[section]

\newtheorem{lemma}{Lemma}[]
\newtheorem{proposition}{Proposition}[section] 
\newtheorem{remark}{Remark}[section]




\title{ Multi-Spectrally Constrained Transceiver Design against Signal-Dependent Interference }

\author{Jing~Yang, \IEEEmembership{Student Member, IEEE}, Augusto~Aubry, \IEEEmembership{Senior Member, IEEE},  Antonio~De~Maio, \IEEEmembership{Fellow, IEEE}, {Xianxiang~Yu}, and 	{Guolong~Cui}, \IEEEmembership{Senior Member, IEEE}
	\thanks{J.~Yang, X.~Yu and G.~Cui are with the School of Information and Communication Engineering,
		University of Electronic Science and Technology of China, Chengdu 611731,	China. E-mail: yangjinguestc@163.com, xianxiangy@gmail.com,  cuiguolong@uestc.edu.cn.
	}
	\thanks{A.~Aubry and A.~De~Maio are with the Department of Electrical and Information Technology Engineering, University of Naples Federico II, $80125$ Napoli, Italy.
		Email: augusto.aubry@unina.it, ademaio@unina.it.}
	\thanks{The work of A. Aubry and A. De Maio was supported by University of Naples Federico II-Radar wAveform Diversity for spectrAlly cRowded
		envirOnments based on oPtimization (RADAROPT) project, and by PON
		ARS01 00615 - OPL-APPS - IIoT OPEN Platform e Applicazioni per il
		manufactoring.
		The work of J. Yang, X. Yu and G. Cui was supported  by the National Natural Science Foundation of China under Grants 61771109, U19B2017, 61871080 and 61701088, by ChangJiang Scholar Program, by the 111 project No.B17008, by China Postdoctoral Science Foundation under Grant 2020M680147.}}

\begin{document}
\maketitle

\begin{abstract}
This paper focuses on the joint synthesis of constant envelope transmit signal and receive filter aimed at optimizing radar performance in signal-dependent interference and spectrally contested-congested environments. To ensure the desired Quality of Service (QoS) at each communication system, a precise control of the interference energy injected by the radar in each licensed/shared bandwidth is imposed.  Besides, along with an upper bound to the maximum transmitted energy, constant envelope (with either arbitrary or discrete phases) and similarity constraints are forced to ensure compatibility with amplifiers operating in saturation regime and bestow relevant waveform features, respectively.
To handle the resulting NP-hard design
problems, new iterative procedures (with ensured convergence properties) are devised to account for continuous and discrete phase constraints, capitalizing on the Coordinate Descent (CD) framework.  Two heuristic procedures are also proposed to perform valuable initializations.
   Numerical results are provided to assess the effectiveness of the conceived algorithms in comparison with the existing methods.
\end{abstract}

\begin{keywords}
Multiple Spectral Compatibility Constraints, Signal-Dependent Interference, Continuous and Discrete Phase-Only Waveform Design, Coordinate Descent (CD) Method.
\end{keywords}



\section{Introduction}

Spectral coexistence among radar and telecommunication systems has drawn flourishing attention due to the conflict between limited Radio Frequency (RF) spectrum resource and the increasing demand of spectrum access \cite{farina2017impact,maio2020radar,riff,Govoni,Deng}. Waveform diversity and cognitive radar are key candidates to alleviate this problem via on the fly adaptation of the transmitted waveforms to actual spectrally contested and congested environments\cite{speex,spm,wicks,cog,gini2008knowledge,blunt2016overview,Lops1,Lops2,Lops3,Lops4}.
In this respect, a plethora of papers in the open literature have dealt with the problem of designing cognitive radar signals with a suitable frequency allocation so as to induce acceptable interference levels on the frequency-overlaid systems, while improving radar performance  in terms of low range-Doppler sidelobes, detection, and tracking abilities\cite{Nunn,7961257,alhujaili2020spectrally,tang2018efficient,bicua2018radar,Ger1,Gerlach2,Pillai1,pillai2,Lindenfeld}.

In \cite{he2010waveform}, a technique to synthesize constant envelope signals sharing a low Integrated Sidelobe Level (ISL) and a sparse spectral allocation is developed, considering as objective function a weighted sum between an ISL-oriented contribution and a term accounting for the waveform energy over the licensed bandwidths. The mentioned approach is generalized to provide a control on the waveform ambiguity function features in \cite{8454704}.
Some interesting algorithms to devise radar  waveforms under spectral compatibility requirements, are also proposed in  \cite{fan2020minimum,8967148,fan_spectrally_2021}, 
 where different performance criteria, like amplitude dynamic range, Peak Sidelobe Level (PSL), spectral shape features, and distance from a reference code, are considered at the signal design process.	
\cite{6784117} introduces the spectral shaping (SHAPE) method to synthesize constant-modulus waveforms aimed at fitting an arbitrary desired spectrum magnitude, whereas in \cite{8358735} and \cite{8770133} the Spectral Level Ratio (SLR)  is adopted as performance metric  assuming, respectively, constant envelope and PAR constraints.
In \cite{6850145, newsi, farina2017impact,tang2018alternating}, the maximization of Signal to Interference plus Noise Ratio (SINR) is accomplished in the presence of  signal-independent disturbance while controlling  the total interference energy on the common band and some desirable features of the transmitted waveform. 
This framework is extended to incorporate multiple spectral compatibility  constraints in\cite{7414411}. Along this line, to comply with the current amplifier technology, extensions to
address optimized synthesis of constant envelop waveforms with the continuous phase\cite{atsp} and the finite alphabet\cite{9082109}, are developed. However, the studies in \cite{6850145, newsi, farina2017impact,tang2018alternating,7414411,atsp,9082109} do not account for signal-dependent interference at the design stage, namely they implicitly assume that the radar is pointing toward the sky (with very low antenna sidelobes) or the target range is far enough that ground clutter is substantially absent. 
Some attempts to design radar transceivers capable of lifting up detection performance in a highly reverberating environment as well as ensuring spectral compatibility have been pursued in the open literature.
For instance, \cite{aa} and \cite{7838312} optimize the SINR (in the presence of signal-dependent disturbance) over the transmit signal and receive structure, while controlling the total amount of interference energy injected on the shared frequency bands.
Still forcing  a constraint on the global spectral interference, \cite{b2,yumo,8356676} propose waveform design procedures in the context of Multiple-Input Multiple-Output (MIMO) radar systems operating in highly reverberating environments. Besides, \cite{b2} also extends the developed framework considering multiple spectral constraints but just an energy constraint is forced on each transmitted signal.  Nevertheless, the design of constant envelope signals ensuring the appropriate Quality of Service (QoS) to each licensed system  still remains an open issue.

Aimed at filling this gap, in this paper,  a new radar transceiver design strategy (with phase-only probing signals) is proposed aimed at optimizing surveillance system performance (via SINR maximization in signal-dependent interference) while fully guaranteeing coexistence with the surrounding RF emitters. Specifically, unlike most of the previous works\footnote{Preliminary results of continuous phase codes are shown in \cite{myconf} without technical details.}, a local control on the interference energy radiated by the constant envelope signal on each reserved frequency bandwidth is performed, so as to enable joint radar and communication activities.  
Moreover,  to comply with the current amplifiers technology (operating in saturation regime) constant envelope waveforms are considered, with either arbitrary or discrete phases.
Besides, to fulfill basic radar requirements, in addition to an upper bound to the maximum radiated energy, a similarity constraint is enforced  to bestow relevant waveform hallmarks, i.e., a well-shaped ambiguity function. 
To handle the resulting NP-hard optimization problems, a suitable re-parameterization of the radar code vector is performed; hence, leveraging the Coordinate Descent (CD) method
\cite{CD}, iterative algorithms (monotonically improving the SINR) are proposed (for both continuous and discrete phase constraints), where either a specific entry of the transmitter parameter vector or the receive filter is optimized at a time while keeping fixed the other variables. Specifically, a global optimal solution of each, possibly non-convex, optimization problem (involved in the two developed procedures) is derived in closed form through the evaluation of elementary functions. Regardless of the phase cardinality, the computational complexity is linear with respect to the number of iterations, cubic with reference to the code length, and less than quadratic with respect to the number of spectral constraints. Finally, two heuristic approaches, accounting for spectral compatibility requirements via a penalty term in the
objective function, are proposed to initialize the procedures via ad-hoc starting solutions. 
To shed light on the capability of the devised algorithms to counter signal-dependent interference and ensure coexistence with the overlaid RF emitters, some case studies are provided at the analysis stage. Moreover, appropriate comparisons with some counterparts available in the open literatures are presented to prove the effectiveness of the new proposed strategies. 

The paper is organized as follows. In
Section \ref{sec_ProbForm},  the system model is introduced, followed by the definition and description of the key performance metrics as well as constraints involved into the formulation of radar transceiver design problems under investigation. In Section \ref{CD-based}, innovative solution methods are developed to handle the NP-hard optimization problems at hand. Section \ref{sec_performance} presents some numerical results to assess the performance. Finally, in Section \ref{sec_conclusion}, concluding remarks and some possible future research avenues are provided.

\begin{table}[]
	\centering
	\caption{Notations} \label{table11}
	\hrulefill
	\begin{itemize}
		\item Bold letters, e.g., $\ba$ (lower
		case), and $\bA$ (upper case) denote vector and matrix, respectively.
		\item $(\cdot)^T$, $(\cdot)^*$, and $(\cdot)^\dag$ indicate
		the transpose, the conjugate, and the conjugate transpose operators, respectively.	
		\item  ${\mathbb{R}}^{N}$, ${\mathbb{C}}^{N}$, ${\mathbb{H}}^{N}$ are the sets of $N$-dimensional vectors of real and complex numbers, and of $N\times N$-dimensional Hermitian matrices, respectively.
		\item For any ${\bs} \in \mathbb{C}^N$, $\|{\bs}\|$ and $\|{\bs}\|_{\infty}$ represent the Euclidean and $l$-infinity norm, respectively.
		\item  $\bI_{N}$ and $\bzero_{N}$ represent the $N\times N$-dimensional identity matrix and 
		the matrix with zero entries.
		\item ${\bf 1}_{N}$ is the $N\times 1$-dimensional vector with all entries equal to 1.
		\item  $\bm e_{n}\in {\mathbb{R}}^{N}$ is a vector whose $n$-th entry is $1$ and other elements are $0$.
		\item   Letter $j$ represents the imaginary unit (i.e.,
		$j=\sqrt{-1}$).
		\item $\Re\{\cdot\}$, $\Im\{\cdot\}$ and  $|\cdot|$ mean the real, imaginary part, and modulus of a complex number, respectively.
		\item
		$\arg(x)\in[-\pi,\pi[$ represents the  argument of the complex number $x$.
		\item $\diag(\ba)$
		indicates the diagonal matrix formed by the entries of $\ba$.
		\item $\Diag(\bA)$ indicates the diagonal matrix whose $i$-th diagonal element is $\bA(i,i)$
		\item  ${{\bJ}_m}\in \mathbb{C}^N$ is the shift matrix with ${\bJ}_m(i,l)= 1$  if $i-l=m$, else ${\bJ}_m(i,l) = 0$, $i,l\in \{ 1,\ldots, N\}$.
		\item $\lambda_{\max }({\bA})$ is the largest eigenvalue of ${\bA}$.
		\item The statistical expectation is indicated as $\mathbb{E}\{\cdot\}$.
		\item $\lfloor{a}\rfloor$ and $\lceil{a}\rceil$ ($a\in\mathbb{R}$) provide the greatest integer not larger than $a$ and the lowest integer not smaller than $a$, respectively.
		\item $\lceil a\rfloor, a\in\mathbb{R}$ denotes the  integer closest to $a$. If the  decimal part of $a$ is 0.5, $\lceil a\rfloor=\left\lceil a \right\rceil \text{ if }a < 0$, else $\lceil a\rfloor=
				\left\lfloor a \right\rfloor $.
		\item $\odot$ is the Hadamard element-wise product.
		\item $v(\cal{P})$ is the optimal value of the optimization Problem $\cal{P}$.
		\item $\frac{\partial {f}(x)}{\partial x}$ denote the derivative of  ${f}(x)$ with respect to $x$.
	\end{itemize}
	\hrulefill
\end{table}

\section{Problem Formulation}\label{sec_ProbForm}
This section is focused on the introduction of the system model accounting for a highly reverberating environment, as well as on the formulation of the constrained optimization problem to jointly design radar transmitter and receiver.
\subsection{System Model}
Let ${\bs}=[s_1,\ldots,s_N]^T\in {\mathbb{C}}^{N}$ be the transmitted fast-time radar code with $N$ being the number of coded sub-pulses.
The observations from the range-azimuth cell under test (CUT) are collected in the vector ${\bv} \in {\mathbb{C}}^{N}$, which can be expressed as \cite{aa}
\begin{equation}
{\bv} = {\alpha _0}{\bs} + {\bc} + {\bn},
\end{equation}
where
\begin{itemize}
	\item 
	$\alpha_0$ accounts for the response of the prospective target within the
	CUT.
	\item
	${\bc}\in{\mathbb{C}}^{N}$ is the signal-dependent interference produced by the range cells adjacent to the CUT, namely
	\begin{equation}
	{\bc} = \sum\limits_{m =  - N + 1 \atop
		m \ne 0}^{N - 1} {{\alpha _m}{{\bJ}_m}{\bs}}, 
	\end{equation}
	with $\alpha_m$ the scattering coefficient associated with the $m$-th range patch.
	 Specifically,	${\left\{ {{\alpha _m}} \right\}_{m \ne 0}}$ are modeled as independent complex, zero-mean, circularly symmetric, random variables with $\E\left[ {{{\left| {{\alpha _m}} \right|}^2}} \right] = {\beta _m},m\ne0$. As a result, the covariance matrix of ${\bc}$ can be cast as
	\begin{equation}
	\begin{split}
	{{\bR}_{\text{d}}^{\left( {\bs} \right)}} = \E\left[ {{\bc}{{\bc}^\dag}} \right] = \sum\limits_{m =  - N + 1\atop m \ne 0}^{m = N - 1} {{\beta_m}{{\bJ}_m}{\bs}{{\bs}^\dag}{\bJ}_m^\dag},
	\end{split}
	\end{equation}
	\item
	${\bn} \in {\mathbb{C}}{^N}$ denotes the signal-independent interference that comprises thermal noise and other disturbances from interfering emitters. It is modeled as a zero-mean, complex, circularly symmetric, random vector with covariance
	$\E\left[ {{\bn}{{\bn}^\dag}} \right] = {{\bR}_{\text{ind}}}$.
\end{itemize}
\subsection{Transmit Code Constraints}\label{constraint}
In this subsection, some constraints on the transmitted signal are introduced to fulfill the appropriate radar requirements \cite{yutvt,7605511}.
\subsubsection{Multi-Spectral Constraints}
To assure spectral compatibility with the surrounding licensed emitters, the radar has to control the spectral shape of the probing waveform to manage the
amount of interfering energy injected on the shared frequency bandwidths. In this respect, let us denote by $K$
the number of licensed emitters while $f^k_1$ and $f^k_2$ indicate the lower and upper normalized frequencies (with respect
to the underlying radar sampling frequency) for the $k$-th
system, respectively. Now, denoting by $S_c\left(f\right)=\left|\sum_{n=1}^{N}s_n e^{-j 2 \pi f n} \right|^2$ the Energy Spectral Density (ESD) of the fast-time code $\bs$, the energy transmitted by the radar within the $k$-th licensed bandwidth (also denoted by stopband) can be expressed as
\begin{eqnarray}\label{eq:vincolo_spettrale}
\int_{f_1^k}^{f_2^k} S_c\left(f\right)df
&=&\bs^{\dag}\bR_I^k\bs,
\end{eqnarray}
where  for $i,l \in \cal N$,
$$\bR_{I}^k(i,l) = (f_2^k-f_1^k) \displaystyle{e^{j  \pi (f_2^k+f_1^k)(i-l)}}\displaystyle{\mbox{sinc}\!\left({\pi (f_2^k\!-\!f_1^k)(i-l)}\right)}.$$
To enable joint radar and communication activities, it is thus demanded 
 that the radar transmitted waveform complies with the constraints
\begin{eqnarray}\label{eq:vincolo_banda}
\bs^\dag\bR_I^k\bs\leq E_I^k,k\in\{1,\ldots,K\},
\end{eqnarray}
where $E_I^k$, $k\in\{1,\ldots,K\}$, accounts for
the acceptable level of disturbance on the $k$-th bandwidth and is tied up to the QoS required by  the $k$-th communication system.

\subsubsection{Constant Envelope and Energy Constraints}
To assure compatibility
	with the amplifier technology and comply with the radar power budget,
constant envelope and energy constraints are forced on the sought waveform, which is tantamount to forcing
\begin{eqnarray}
|s_i|&=&|s_j|,\, i,j\in{\cal{N}},\\
{\left\| {\bs} \right\|^2} &\le& 1.
\end{eqnarray}
\subsubsection{Finite Alphabet Constraint}
As a limited number of bits are available in digital waveform generators, the finite alphabet constraint is possibly forced, requiring $\arg (s_i)\in \Omega_M$, where $\Omega_M=\frac{2\pi}{M}\left\{-\lfloor\frac{M}{2}\rfloor,\ldots,-\lfloor\frac{M}{2}\rfloor+M-1\right\}$ denotes the discrete set of $M$ equi-spaced phases.

\subsubsection{Similarity Constraint}
To bestow some desirable attributes (e.g., Doppler tolerance, ISL, PSL and etc.) to the radar probing signal, a similarity constraint is imposed on the transmitted code, i.e.,
\begin{equation}\label{sm}
{\left\| {\frac{{\bs}}{{\left\| {\bs} \right\|}} - {{\bs}_0}} \right\|_\infty } \le \frac{\epsilon }{{\sqrt N }},
\end{equation}
where $0 \le \epsilon \le 2$ rules the size of the trust hypervolume, and $\bs_0$ is a specific constant modulus reference code with ${\left\| {{{\bs}_0}} \right\|^2} = 1$.

\subsection{Transceiver Design Problem Formulation}
A transceiver design approach aimed at optimizing target detectability under the probing code requirements of Subsection \ref{constraint} is now formalized.
To this end, supposing the received signal $\bv$ filtered via $\bw\in {\mathbb{C}}^{N}$, the SINR at the output of the filter, i.e.,
	\begin{equation}
	\mbox{SINR}(\bs,\bw) = \frac{{{{\left| {{\alpha _0}} \right|}^2}{{\left| {{{\bw}^\dag}{\bs}} \right|}^2}}}{{{{\bw}^\dag}\left( {{\bR}_{\text{d}}^{\left( {\bs} \right)} + {{\bR}_{\text{ind}}}} \right){\bw}}},
	\end{equation}
 is considered as the design metric.

Hence, according to the code limitations introduced  in Subsection \ref{constraint},
the joint transmit-receive pair design assuming either continuous phase or finite alphabet phase codes, respectively, can
 be formulated as the following non-convex and in general NP-hard optimization problems\footnote{If ${\left\{ {{\alpha _m}} \right\}_{m \ne 0}}$ and ${\bn}$ are statistically independent and Gaussian distributed, the SINR optimization is tantamount to maximizing the detection probability.}
\begin{equation}\label{prob_cp}
{\cal{P}}_{\infty} \left \{ \begin{array}{ll}
\displaystyle{\max_{\bs,\bw }} & \overline{\mbox{SINR}}(\bs,\bw )\\
\mbox{s.t.}  & {{\bs}^\dag}{\bR}_I^k{\bs}\le E_I^k, k\in\{1,\ldots,K\},\\
&{\left\| {\bs} \right\|^2} \le 1, \\
&{\left\| {\frac{{\bs}}{{\left\| {\bs} \right\|}} - {{\bs}_0}} \right\|_\infty } \le \frac{\epsilon }{{\sqrt N }},\\
&|s_i|=|s_j|,i,j\in {\cal N},\\
&\arg (s_i)\in \Omega_{\infty},
\end{array} \right.
\end{equation}
and
\begin{equation}\label{prob_dp}
{\cal{P}}_M \left \{ \begin{array}{ll}
\displaystyle{\max_{\bs,\bw  }} & \overline{\mbox{SINR}}(\bs,\bw )\\
\mbox{s.t.}  & {{\bs}^\dag}{\bR}_I^k{\bs}\le E_I^k, k\in\{1,\ldots,K\},\\
&{\left\| {\bs} \right\|^2} \le 1, \\
&{\left\| {\frac{{\bs}}{{\left\| {\bs} \right\|}} - {{\bs}_0}} \right\|_\infty } \le \frac{\epsilon }{{\sqrt N }},\\
&|s_i|=|s_j|,i,j\in {\cal N},\\
&\arg (s_i)\in \Omega_M, 
\end{array} \right.
\end{equation}
with $\overline{\mbox{SINR}}(\bs,\bw) =\mbox{SINR}(\bs,\bw)/{\left| {{\alpha _0}} \right|}^2$ and $\Omega_{\infty}=[-\pi,\pi[$.

\section{Code and Filter Synthesis}\label{CD-based}
In this section, an iterative design procedure is developed  to get optimized transmit-receive pairs leveraging the CD paradigm \cite{CD}. By invoking elementary function, each nonconvex optimization subproblem is solved in closed form, and the monotonic improvement of the SINR is ensured along the iterations.
	As first step toward this goal, let us re-parameterize the optimization vector $\bs$  as
	\begin{equation}
	\bs=\sqrt{P}{\bar \bs}_{\bm \varphi}\odot\bs_0,
	\end{equation}
	where ${\bar \bs}_{\bm \varphi}=[e^{j\varphi_1},\ldots,e^{j\varphi_N}]^T\in\mathbb{C}^N$, with ${\bm \varphi}=\left[\varphi_1,\ldots, \varphi_N\right] ^T\in\mathbb{R}^N$, and $P$ ($0\le P \le 1$) account for code  phases and the signal amplitude level, respectively.
	As a result, the similarity constraint is tantamount to $\Re\{e^{j\varphi_i}\}\geq1-\epsilon^2/2,i\in {\cal{N}}$.
	 Hence, denoting by $\delta=\arccos(1-\epsilon^2/2)$, for any $i\in \cal{N}$
		\begin{itemize}
			\item $\varphi_i\in\Psi_\infty= [-\delta,\delta]$ for the continuous  case\cite{atsp};
		 \item $\varphi_i\in\Psi_M= \frac{2\pi}{M}\left\{{\alpha_\epsilon ,\alpha_\epsilon  + 1, \cdots ,\alpha_\epsilon  + \omega_\epsilon  - 1}\right \}$  with
	$\alpha_\epsilon  =  - \left\lfloor {\frac{{M\delta}}{{2\pi }}} \right\rfloor$ and	
	 \begin{equation*}
	\omega_\epsilon  = \left\{ \begin{array}{l}
	1 -2\alpha_\epsilon, \text{ if $\epsilon  \in \left[ {0,2} \right[$}\\
	M,\;\text{if $\epsilon  = 2$}
	\end{array} \right.
	\end{equation*}
	for the finite alphabet case \cite{9082109}.
\end{itemize}
	Additionally, in the transformed domain, the spectral constraints can be cast as
	\begin{equation}
	P{{{\bar \bs}_{\bm \varphi}}^\dag}\overline{{\bR}}_I^k{{\bar \bs}_{\bm \varphi}}\le E_I^k, k\in\{1,\ldots,K\},
	\end{equation}	
	where $\overline{{\bR}}_I^k=\diag\{\bs_0\}^\dag{{{\bR}}_I^k}\diag\{\bs_0\},k=1,\ldots, K$.	
Finally, the objective function in \eqref{prob_dp} can be expressed as  
\begin{equation}\label{obj}
\begin{split}
\chi\left({\bm \varphi},P,\bw \right)&=\overline{\mbox{SINR}}(\sqrt{P}{\bar \bs}_{\bm \varphi}\odot\bs_0,\bw)\\
&=\frac{P{{{{\bar \bs}_{\bm \varphi}}^\dag}{{\bM}_1^{(\bw)}}{{\bar \bs}_{\bm \varphi}}}}{P{{{{\bar \bs}_{\bm \varphi}}^\dag}{{\bM}_2^{(\bw)}}{{\bar \bs}_{\bm \varphi}} + \vartheta^{(\bw)}}},
\end{split} 
\end{equation}
	with
	\begin{eqnarray*}
	{\bM}_1^{(\bw)}&=&\diag\{\bs_0\}^\dag\bw\bw^{\dag}\diag\{\bs_0\},\\
	{\bM}_2^{(\bw)}&=&\diag\{\bs_0\}^\dag\sum\limits_{m =  - N + 1\atop
		m \ne 0}^{m = N - 1} {{\beta_m}{{\bJ}_m^\dag}{\bw\bw^{\dag}}{\bJ}_m}\diag\{\bs_0\},	\\
	\vartheta^{(\bw)}&=&{{\bw}^{\dag}}{{\bR}_{\text{ind}}}{{\bw}}.
	\end{eqnarray*}

To proceed further, let us introduce the optimization vector $\by=\left[{\bm \varphi}^T,P,\bw^T\right]^T\in{\mathbb{R}}^{N+1}\times{\mathbb{C}}^N$ and denote by $\chi\left(\by \right)$ the function \eqref{obj} evaluated at $([y_1,\ldots,y_N]^T,y_{N+1},[y_{N+2},\ldots,y_{2N+1}]^T)$. Consequently, Problems ${\cal{P}}_{\infty}$ and ${\cal{P}}_M$ can be equivalently cast as
	\begin{equation}\label{recast}
	{\bar{\cal{P}}}_p \left \{ \begin{array}{ll}
	\displaystyle{\max_{\by}} & \chi\left(\by \right)\\
	\mbox{s.t.}  
	&{\bar \bs}_{\bm \varphi}=[e^{jy_1},\ldots,e^{jy_N}]^T,\\
	&y_{N+1}{{{\bar \bs}_{\bm \varphi}}^\dag}\overline{{\bR}}_I^k{{\bar \bs}_{\bm \varphi}}\le E_I^k, k\in\{1,\ldots,K\},\\
	&0\le{y_{N+1}} \le 1,\\
	&y_i\in\Psi_p,i\in {\cal{N}}
	\end{array} \right.
	\end{equation}
where $p$ is either an integer number or $\infty$ and specifies the code alphabet size.

The provided reformulation in \eqref{recast} paves the way for an effective CD-based optimization process, where the design variables of the vector  $\by$ are partitioned into $N+2$ blocks given by $y_1,\ldots,y_{N+1},\by_{N+2}$, with $\by_{N+2}=[y_{N+2},\ldots,y_{2N+1}]^T\in\mathbb{C}^N$ corresponding to the receive filter. 
Hence, at the $h$-th step, $h=1,\ldots,N+2$, of the $n$-th iteration, the $h$-th block is optimized with the other blocks fixed at their previous optimized value. Now, denoting by
$\by^{(n)}=\left[y_1^{(n)},\ldots,y_{N+1}^{(n)},\by_{N+2}^{(n)T}\right]^T$, it follows that
\begin{equation}
\begin{split}
&\chi\left(\by^{(n-1)} \right)\\
\leq& \chi\left(\left[y_1^{(n)},y_2^{(n-1)}\ldots,y_{N+1}^{(n-1)},\by_{N+2}^{(n-1)T}\right]^T\right)\\
\leq&
\cdots\\
\leq&\chi\left(\left[y_1^{(n)},\ldots,y_{N+1}^{(n)},\by_{N+2}^{(n-1)T}\right]^T\right)\\
\leq&\chi\left(\by^{(n)} \right)
\nonumber
\end{split}
\end{equation}

The above inequalities entail that the objective
function values monotonically increase along with the iterations. Thus, being $\chi\left(\by \right)$ upper bounded by $ {1}/{\sigma_0}$ with $\sigma_0$ the variance of thermal noise, $\chi\left(\by^{(n)} \right)$ converges to a finite value.
In the following, the procedures devised to optimize the $h$-th block, $h=1,\ldots,N+2$, at the $n$-th iteration are developed. This represents the main technical achievement of this paper from an optimization theory standpoint.
\subsection{Code Phase Optimization}
Assuming $h\in\cal N$, the block to optimize is $y_h$, with the other blocks set to their previous optimized values, i.e.,
${\by}_{-h}^{(n)}=\left[y_1^{(n)},\ldots,y_{h-1}^{(n)},y_{h+1}^{(n-1)},\ldots,y_{N+1}^{(n-1)},\by_{N+2}^{(n-1)T}\right]^T$, $h\in\cal N$. Hence, as shown in Appendix \ref{A}, the problem to solve is
\begin{equation}\label{phase}
{\cal{P}}_p^{y^{(n)}_h}\left\{ \begin{array}{ll}
\displaystyle{\max_{y_h}} &  \chi\left(y_h;{\by}_{-h}^{(n)}\right)\\
\mbox{s.t.}  & \Re\{z_{k,h}^{(n)}e^{jy_h}\}\le \tilde{c}_{k,h}^{(n)}, k\in\{1,\ldots,K\}\\
& y_h\in \Psi_p
\end{array} \right.
\end{equation}
where    
\begin{equation*}
\begin{split}
\chi\left(y_h;{\by}_{-h}^{(n)}\right)=\frac{\Re\{a_{h}^{(n)}e^{jy_h}\}+b_h^{(n)} }{\Re\{c_h^{(n)}e^{jy_h}\}+d_{h}^{(n)} },h\in{\cal{N}}.
\end{split}
\end{equation*}
The feasible set\footnote{ To avoid unnecessary complications, $\epsilon<2$ is assumed.} $\bar{\mathcal{F}}_p$ of Problem ${\cal{P}}_p^{y^{(n)}_h}$ and the monotonicity properties of $\chi\left(y_h;{\by}_{-h}^{(n)}\right)$ are analyzed in Appendix \ref{proproof} with specification of global maximizer and minimizer $\phi_{g,h}^{(n)}$ and $\phi_{s,h}^{(n)}$, respectively. In particular, it is shown that $\bar{\mathcal{F}}_p$ can be expressed as
\begin{equation}\label{fea}
\bar{\mathcal{F}}_p=\left\{ {\begin{array}{*{20}{ll}}
	\displaystyle{{\mathop  \cup \limits_{i = 1}^{{K_\infty}} }}\left[\hat{l}_i^{\infty},\hat{u}_i^{\infty}\right],&{\text{if } p=\infty},\\
{{\mathop  \cup \limits_{\scriptstyle t = 1}^{K_M} \left\{\hat{l}_t^M,\hat{l}_t^M+\frac{2\pi}{M},\ldots,\hat{u}_t^M\right\}}}&{{\text{if } p=M}},
	\end{array}} \right.
\end{equation}
where (see Appendix \ref{proproof} for details)
\begin{itemize}
	\item 
$K_\infty$ is the number of disjoint intervals that compose $\bar{\mathcal{F}}_\infty$ with $\hat{l}_i^{\infty},\hat{u}_i^{\infty}\in \Psi_\infty$, $i=1,\ldots,K_\infty$, depending on the specific instance of Problem ${\cal{P}}_\infty^{y^{(n)}_h}$ and such that $\hat{l}_i^\infty\le\hat{u}_i^\infty<\hat{l}_{i+1}^\infty,i=1,\ldots,K_\infty-1$.
\item $K_M\le K_\infty$ is the number of disjoint discrete sets that form $\bar{\mathcal{F}}_M$  with $\hat{l}_t^M,\hat{u}_t^M\in\Psi_M,t=1,\ldots,K_M$, and $\hat{l}_t^M\le\hat{u}_t^M<\hat{l}_{t+1}^M, t=1,\ldots,K_M-1$.
\end{itemize}

The following proposition  paves
the way to the solution of the non-convex optimization problem ${\cal{P}}_{\infty}^{y^{(n)}_h}$.

\begin{proposition}\label{pro}
	An optimal solution $y_h^\star$ to ${\cal{P}}_{\infty}^{y^{(n)}_h}$ can be evaluated in closed form via the computation of elementary function as follows:
	\begin{itemize}
		\item if the unconstrained global optimal solution $\phi_{g,h}^{(n)}$ is feasible, i.e., $\phi_{g,h}^{(n)} \in\bar{\mathcal{F}}_{\infty}$, the optimal solution is $x_d^\star=\phi_{g,h}^{(n)}$;
		\item otherwise, if  $\phi_{g,h}^{(n)}\notin  \left[ \hat{l}_1^\infty  ,\hat{u}_{K_\infty}^\infty\right] $, the optimal solution belongs to $\{\hat{l}_{1}^\infty,\hat{u}_{K_\infty}^\infty\}$, i.e.,
		\begin{equation}
		y_h^\star=\arg\max\limits_{y_h\in\{\hat{l}_{1}^\infty,\hat{u}_{K_\infty}^\infty\}}\chi\left(y_h;{\by}_{-h}^{(n)}\right);
		\end{equation}
		\item else, denoting by $r^\star=\max\limits_{\hat{u}_{r}^\infty
			< \phi_{g,h}^{(n)}} r$, the optimal solution is
		\begin{equation}
		\begin{split}
		y_h^\star =
		\arg\max\limits_{y_h\in\{\hat{l}_1^\infty,\hat{u}_{r^\star}^\infty,\hat{l}_{r^\star+1}^\infty,\hat{u}_{K_\infty}^\infty\}}\chi\left(y_h;{\by}_{-h}^{(n)}\right),
		\end{split}
		\end{equation}
	\end{itemize}
\end{proposition}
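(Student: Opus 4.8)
The plan is to exploit the single-variable structure of Problem ${\cal{P}}_\infty^{y^{(n)}_h}$: after fixing all blocks except $y_h$, the objective $\chi(y_h;{\by}_{-h}^{(n)})$ is a ratio of two sinusoids in $y_h$, and the feasible set $\bar{\mathcal F}_\infty$ is a finite union of closed arcs $\bigcup_{i=1}^{K_\infty}[\hat l_i^\infty,\hat u_i^\infty]\subseteq\Psi_\infty$, all of which is already established in Appendix~\ref{proproof}. From that appendix I also take the unconstrained global maximizer $\phi_{g,h}^{(n)}$ and minimizer $\phi_{s,h}^{(n)}$ of $\chi(\cdot;{\by}_{-h}^{(n)})$ on $\Psi_\infty$ (more precisely on the circle, intersected with the similarity arc), together with the crucial monotonicity fact: on the complement of $\{\phi_{g,h}^{(n)},\phi_{s,h}^{(n)}\}$ the function $\chi$ is monotone on each of the two arcs joining these two critical points. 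Equivalently, starting at $\phi_{s,h}^{(n)}$ and moving in either angular direction, $\chi$ increases monotonically until it reaches $\phi_{g,h}^{(n)}$.

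The argument then splits into the three cases of the proposition, reasoning purely from monotonicity. \textbf{Case 1:} if $\phi_{g,h}^{(n)}\in\bar{\mathcal F}_\infty$, it is feasible and is the global unconstrained optimum, hence trivially optimal for the constrained problem. \textbf{Case 2:} if $\phi_{g,h}^{(n)}\notin[\hat l_1^\infty,\hat u_{K_\infty}^\infty]$, then the entire feasible set lies on one side of the maximizer; I would observe that $[\hat l_1^\infty,\hat u_{K_\infty}^\infty]$ is contained in one of the two monotone arcs emanating from $\phi_{s,h}^{(n)}$ toward $\phi_{g,h}^{(n)}$ (here one must also check that $\phi_{s,h}^{(n)}$ itself is not interior to the feasible set, which follows because the minimizer, if feasible, would force Case~1 reasoning on the other side — this needs a short verification), so $\chi$ is monotone on $[\hat l_1^\infty,\hat u_{K_\infty}^\infty]$ and its maximum over $\bar{\mathcal F}_\infty\subseteq[\hat l_1^\infty,\hat u_{K_\infty}^\infty]$ is attained at an endpoint, necessarily $\hat l_1^\infty$ or $\hat u_{K_\infty}^\infty$. \textbf{Case 3:} if $\phi_{g,h}^{(n)}\in[\hat l_1^\infty,\hat u_{K_\infty}^\infty]$ but $\phi_{g,h}^{(n)}\notin\bar{\mathcal F}_\infty$, then $\phi_{g,h}^{(n)}$ falls in a gap $(\hat u_{r^\star}^\infty,\hat l_{r^\star+1}^\infty)$ between two consecutive feasible arcs, where $r^\star=\max_{\hat u_r^\infty<\phi_{g,h}^{(n)}}r$. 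On $[\hat l_1^\infty,\hat u_{r^\star}^\infty]$ the function is monotone increasing toward the gap (so its max over that part of $\bar{\mathcal F}_\infty$ is at $\hat u_{r^\star}^\infty$ or at $\hat l_1^\infty$ if the minimizer sits there), and symmetrically on $[\hat l_{r^\star+1}^\infty,\hat u_{K_\infty}^\infty]$ the max is at $\hat l_{r^\star+1}^\infty$ or $\hat u_{K_\infty}^\infty$; comparing these four candidate values gives the claimed formula.

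The main obstacle I anticipate is the bookkeeping around the minimizer $\phi_{s,h}^{(n)}$: the clean "max is at an endpoint" conclusions in Cases 2 and 3 rely on each relevant feasible sub-interval lying entirely within a region where $\chi$ is monotone, and this can fail if $\phi_{s,h}^{(n)}$ lies inside that sub-interval (then $\chi$ first decreases, then increases). I would need to argue that whenever $\phi_{s,h}^{(n)}$ is interior to some feasible arc, comparing the two arc-endpoints still captures the maximum over that arc — which is true precisely because on a single arc split by the minimizer, $\chi$ is monotone on each of the two resulting pieces, so the per-arc maximum is still one of the two arc-endpoints. Folding this into the three-case structure, so that the candidate endpoint sets written in the proposition genuinely dominate, is the delicate part; everything else (the sinusoidal-ratio differentiation, sign of $\frac{\partial}{\partial y_h}\chi$, closed-form stationary points) is the routine computation already relegated to the appendices. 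I would also note the edge cases $K_\infty=1$ (Case~3 cannot arise, gap is empty) and $\bar{\mathcal F}_\infty=\emptyset$ (no feasible $y_h$; handled separately by the outer algorithm), and remark that the discrete counterpart ${\cal{P}}_M^{y^{(n)}_h}$ will follow by the same monotonicity logic with each continuous maximization over an arc replaced by inspecting the at-most-two grid points adjacent to the continuous optimum within that arc.
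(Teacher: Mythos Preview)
Your plan is essentially the same as the paper's: split on the position of $\phi_{g,h}^{(n)}$ relative to $\bar{\mathcal F}_\infty$ and its convex hull $[\hat l_1^\infty,\hat u_{K_\infty}^\infty]$, then invoke the monotonicity structure established in Appendix~\ref{proproof}. The paper organizes the monotonicity argument via an explicit sub-case split on the sign of $\hat d_h^{(n)}$ (and $\hat e_h^{(n)}$), whereas you phrase it as ``one max and one min on the circle, monotone on the two arcs between them''; these are the same content at different levels of abstraction.

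There is one genuine misstep to fix. In Case~2 your parenthetical claim that $\phi_{s,h}^{(n)}$ cannot lie in the feasible set, justified by ``Case~1 reasoning on the other side'', is incorrect: $\phi_{s,h}^{(n)}$ can perfectly well sit inside $[\hat l_1^\infty,\hat u_{K_\infty}^\infty]$ (and even inside a feasible arc) while $\phi_{g,h}^{(n)}$ is outside. Consequently your assertion that $\chi$ is \emph{monotone} on $[\hat l_1^\infty,\hat u_{K_\infty}^\infty]$ is false in general. The correct observation, which the paper uses and which you almost reach in your obstacle paragraph, is that on any interval not containing $\phi_{g,h}^{(n)}$ the function $\chi$ is \emph{quasi-convex} (either monotone, or decreasing to $\phi_{s,h}^{(n)}$ then increasing). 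Quasi-convexity on $[\hat l_1^\infty,\hat u_{K_\infty}^\infty]$ immediately gives $\chi(y)\le\max\{\chi(\hat l_1^\infty),\chi(\hat u_{K_\infty}^\infty)\}$ for every feasible $y$, which is exactly the Case~2 claim; the same reasoning applied separately to $[\hat l_1^\infty,\hat u_{r^\star}^\infty]$ and $[\hat l_{r^\star+1}^\infty,\hat u_{K_\infty}^\infty]$ handles Case~3. This also closes the gap you flag as ``the delicate part'': per-arc endpoints are not enough on their own, but quasi-convexity on each of the two hull pieces reduces the candidates precisely to the four points in the proposition.
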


\begin{proof}
	See Appendix \ref{p1}.
\end{proof}

 According to Proposition \ref{pro}, as long as $\phi_{g,h}^{(n)}$ does not belong to one of the closed and bounded intervals $[\hat{l}_i^{\infty},\hat{u}_i^{\infty}], i=1,\ldots,K_\infty$, the global optimal solution is one of the intervals extremes. As a consequence, embedding $\bar{\mathcal{F}}_M$ to an appropriate union of closed intervals the following corollary holds true.

\begin{corollary}\label{fd}
An optimal solution $y_h^\star$ to ${\cal{P}}_{M}^{y^{(n)}_h}$ can be derived as follows:
\begin{itemize}
	\item if there exists an index $q^\star\in\{1,\ldots,K_M\}$ satisfying $\hat{l}_{q^\star}^M\le\phi_{g,h}^{(n)}\le \hat{u}_{q^\star}^M$, then 
		\begin{equation}
y_h^\star=\arg\max\limits_{y_h\in\left\{\phi^M_l,\phi^M_u\right\}}\chi\left(y_h;{\by}_{-h}^{(n)}\right),
	\end{equation}
	where 
	\begin{eqnarray}
	\phi^M_l=\left\lfloor {\frac{{{\phi_{g,h}^{(n)}}M}}{{2\pi }}} \right\rfloor \frac{{2\pi }}{M},
		\phi^M_u=\left\lceil {\frac{{{\phi_{g,h}^{(n)}}M}}{{2\pi }}} \right\rceil \frac{{2\pi }}{M};
	\end{eqnarray}
	\item otherwise, if $\phi_{g,h}^{(n)}\notin  \left[ \hat{l}_1^M ,\hat{u}_{K_M}^M\right] $, 
	\begin{equation}
	y_h^\star=\arg\max\limits_{y_h\in\{\hat{l}_1^M,\hat{u}_{K_M}^M\}}\chi\left(y_h;{\by}_{-h}^{(n)}\right);
	\end{equation}
	\item else, 
	\begin{equation}
	\begin{split}
	y_h^\star =
	\arg\max\limits_{y_h\in\{\hat{l}_1^M,\hat{u}_{r^\star}^M,\hat{l}_{r^\star+1}^M,\hat{u}_{K_M}^M\}}\chi\left(y_h;{\by}_{-h}^{(n)}\right),
	\end{split}
	\end{equation}
	where $r^\star=\max\limits_{\hat{u}_{r}^M
		< \phi_{g,h}^{(n)}} r$.
\end{itemize}

\end{corollary}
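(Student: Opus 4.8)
The plan is to reduce ${\cal{P}}_M^{y^{(n)}_h}$ to the continuous problem already solved by Proposition~\ref{pro}, exploiting that the discrete feasible set is squeezed between the grid $\Psi_M$ and a union of closed intervals. Concretely, I would set $\widetilde{\mathcal{F}}=\bigcup_{t=1}^{K_M}[\hat{l}_t^M,\hat{u}_t^M]$ and first observe, straight from \eqref{fea}, that $\widetilde{\mathcal{F}}$ is a disjoint union of $K_M$ closed intervals whose endpoints lie in $\Psi_M\subset\Psi_\infty$ and are ordered as required by Proposition~\ref{pro}, that $\bar{\mathcal{F}}_M\subseteq\widetilde{\mathcal{F}}$, and --- crucially --- that every grid point of $\Psi_M$ lying in some $[\hat{l}_t^M,\hat{u}_t^M]$ already belongs to $\bar{\mathcal{F}}_M$, since by \eqref{fea} the arc $\{\hat{l}_t^M,\hat{l}_t^M+2\pi/M,\ldots,\hat{u}_t^M\}$ consists of consecutive grid points. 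These facts give the sandwich $\max_{y_h\in\bar{\mathcal{F}}_M}\chi(y_h;{\by}_{-h}^{(n)})\le\max_{y_h\in\widetilde{\mathcal{F}}}\chi(y_h;{\by}_{-h}^{(n)})$, and I would note that the argument of Proposition~\ref{pro} uses only the disjoint‑closed‑interval structure of the feasible set together with the monotonicity/unimodality of $\chi(\,\cdot\,;{\by}_{-h}^{(n)})$ established in Appendix~\ref{proproof}, hence it applies verbatim with $\widetilde{\mathcal{F}}$ in place of $\bar{\mathcal{F}}_\infty$.

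Next I would dispatch the second and third items. In those branches $\phi_{g,h}^{(n)}\notin\bigcup_t[\hat{l}_t^M,\hat{u}_t^M]$, so Proposition~\ref{pro} applied to $\widetilde{\mathcal{F}}$ states that the continuous maximum of $\chi(\,\cdot\,;{\by}_{-h}^{(n)})$ over $\widetilde{\mathcal{F}}$ is attained at a point of $\{\hat{l}_1^M,\hat{u}_{K_M}^M\}$ when $\phi_{g,h}^{(n)}\notin[\hat{l}_1^M,\hat{u}_{K_M}^M]$, and at a point of $\{\hat{l}_1^M,\hat{u}_{r^\star}^M,\hat{l}_{r^\star+1}^M,\hat{u}_{K_M}^M\}$ with $r^\star=\max\{r:\hat{u}_r^M<\phi_{g,h}^{(n)}\}$ otherwise. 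Since every candidate here is an interval endpoint of $\widetilde{\mathcal{F}}$, it lies in $\Psi_M$ and hence, by the remark above, in $\bar{\mathcal{F}}_M$; plugging this into the sandwich inequality forces $\max_{\bar{\mathcal{F}}_M}\chi=\max_{\widetilde{\mathcal{F}}}\chi$ and yields the claimed $\arg\max$ characterization over the stated finite candidate sets, which is precisely the second and third items.

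The one case that needs genuine extra work --- and the step I expect to be the main (if mild) obstacle --- is the first item, where $\hat{l}_{q^\star}^M\le\phi_{g,h}^{(n)}\le\hat{u}_{q^\star}^M$ for some $q^\star$: here the continuous maximizer over $\widetilde{\mathcal{F}}$ is $\phi_{g,h}^{(n)}$ itself, which in general is \emph{not} a grid point, so the sandwich alone does not close. I would argue that, by the unimodality of $\chi$ recorded in Appendix~\ref{proproof} (its superlevel sets $\{y_h:\chi(y_h;{\by}_{-h}^{(n)})\ge v\}$ are arcs around $\phi_{g,h}^{(n)}$), the maximum of $\chi$ over the \emph{entire} grid $\Psi_M$ is attained at one of the two grid neighbors of $\phi_{g,h}^{(n)}$, namely $\phi^M_l=\lfloor\phi_{g,h}^{(n)}M/(2\pi)\rfloor\,2\pi/M$ or $\phi^M_u=\lceil\phi_{g,h}^{(n)}M/(2\pi)\rceil\,2\pi/M$; then, because $\hat{l}_{q^\star}^M$ and $\hat{u}_{q^\star}^M$ are multiples of $2\pi/M$ straddling $\phi_{g,h}^{(n)}$, monotonicity of $\lfloor\cdot\rfloor$ and $\lceil\cdot\rceil$ gives $\hat{l}_{q^\star}^M\le\phi^M_l\le\phi^M_u\le\hat{u}_{q^\star}^M$, so both $\phi^M_l$ and $\phi^M_u$ lie in the discrete arc $\{\hat{l}_{q^\star}^M,\ldots,\hat{u}_{q^\star}^M\}\subseteq\bar{\mathcal{F}}_M$ and are feasible. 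The chain $\max_{\bar{\mathcal{F}}_M}\chi\le\max_{\Psi_M}\chi=\max\{\chi(\phi^M_l;{\by}_{-h}^{(n)}),\chi(\phi^M_u;{\by}_{-h}^{(n)})\}\le\max_{\bar{\mathcal{F}}_M}\chi$ then delivers the first item. The delicate points to get right are that $\chi$ is genuinely unimodal on the circle --- so that ``best grid point $=$ neighbor of $\phi_{g,h}^{(n)}$'' is valid --- and that $\widetilde{\mathcal{F}}$ legitimately meets the hypotheses under which Proposition~\ref{pro} was proved; both are inherited from \eqref{fea} and Appendix~\ref{proproof}.
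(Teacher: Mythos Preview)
Your proposal is correct and follows essentially the same strategy as the paper: for the second and third items you relax $\bar{\mathcal{F}}_M$ to the union of closed intervals $\bigcup_{t}[\hat{l}_t^M,\hat{u}_t^M]$ and invoke Proposition~\ref{pro}, exactly as the paper does, observing that the resulting optimizer is an interval endpoint and hence feasible for the discrete problem. For the first item the paper instead relaxes to the continuous set $[\hat{l}_1^M,\phi_l^M]\cup[\phi_u^M,\hat{u}_{K_M}^M]$ and performs an explicit case analysis on the position of $\phi_{s,h}^{(n)}$ (using the periodicity $\chi(-\pi)=\chi(\pi^-)$), whereas you relax to $\Psi_M$ and package the same monotonicity-plus-periodicity ingredients as ``circle unimodality''; the two arguments are equivalent in content, yours being a bit more conceptual and the paper's a bit more explicit.
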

\begin{proof}
	See Appendix \ref{p2}.
\end{proof}

%
Hence, starting from a feasible solution $\by^{(0)}$, the solution to ${\cal{P}}_p^{y^{(n)}_h}$ for different values of the objective parameters can be easily accomplished following the line of \textbf{Algorithm \ref{alg_op}}.  
\begin{algorithm}
	\caption{Phase Code Entry Optimization.}
	\label{alg_op}
	\textbf{Input:} $\by_{-h}^{(n)}$, $p\in\{M,\infty\}$, $h$, $a_h^{(n)}$, $b_h^{(n)}$, $c_h^{(n)}$, $d_h^{(n)}$, $z_{k,h}^{(n)}$, $\tilde{c}_{k,h}^{(n)}$, $k\in\{1,\ldots,K\}$; \\
	\textbf{Output:} $y_h^\star$;
	\begin{enumerate}
\item Compute the feasible set 	$\bar{\mathcal{F}}_{p}$ as given in \eqref{fea};
\item \textbf{If}  $\phi_{g,h}^{(n)} \in \left[ \hat{l}_i^p ,\hat{u}_{i}^p\right]$, $i\in\left\{1,\ldots,K_p\right\}$, $$y_h^\star = \left\{ {\begin{array}{*{20}{c}}
	{\phi _{g,h}^{(n)},}&\text{if }{p = \infty, }\\
\arg\max\limits_{y_h\in\left\{\phi^M_l,\phi^M_u\right\}}\chi\left(y_h;{\by}_{-h}^{(n)}\right)&\text{if }{p = M.}
	\end{array}} \right.$$ 
\item \textbf{Elseif}, $\phi_{g,h}^{(n)} \notin \left[ \hat{l}_1^p ,\hat{u}_{K_p}^p\right]$,
\begin{equation*}
y_h^\star =
\arg\max\limits_{y_h\in\{\hat{l}_1^p,\hat{u}_{K_p}^p\}}\chi\left(y_h;{\by}_{-h}^{(n)}\right).
\end{equation*}
\item \textbf{Else},
	\begin{equation*}
\begin{split}
y_h^\star =
\arg\max\limits_{y_h\in\{\hat{l}_1^p,\hat{u}_{r^\star}^p,\hat{l}_{r^\star+1}^p,\hat{u}_{K_p}^p\}}\chi\left(y_h;{\by}_{-h}^{(n)}\right),
\end{split}
\end{equation*}
where $r^\star=\max\limits_{\hat{u}_{r}^p
	< \phi_{g,h}^{(n)}} r$;
\item \textbf{Output} $y_h^\star$.
	\end{enumerate}
\end{algorithm}

\subsection{Code Amplitude Optimization}
Focusing on $h=N+1$,  ${\bar{\cal{P}}}_p$ w.r.t. ${y_{N+1}}$ reduces to the following problem,
\begin{equation*}
{\cal{P}}_{y_{N+1}^{(n)}}\left\{ \begin{array}{ll}
{\max\limits_{y_{N+1}}} & \frac{{y_{N+1}}{{{{\bar \bs}_{\bm \varphi}}^{(n)\dag}}{{\bM}_1^{\left( \by_{N+2}^{(n-1)}\right)}}{{\bar \bs}_{\bm \varphi}^{(n)}}}}{{y_{N+1}}{{{{\bar \bs}_{\bm \varphi}}^{(n)\dag}}{{\bM}_2^{\left( \by_{N+2}^{(n-1)}\right)}}{{\bar \bs}_{\bm \varphi}}^{(n)} + \vartheta^{\left( \by_{N+2}^{(n-1)}\right)} }}\\
\mbox{s.t.}  &  y_{N+1} \,q_k^{(n)}\le E_I^k, k\in\{1,\ldots,K\}\\
& 0\leq y_{N+1}\leq 1
\end{array} \right.
\end{equation*}
where $q_k^{(n)}={{\bar \bs}_{\bm \varphi}^{(n)\dag}}\overline{{\bR}}_I^k{{\bar \bs}_{\bm \varphi}}^{(n)}$, and ${\bar \bs}_{\bm \varphi}^{(n)}=[e^{j\varphi_1^{(n)}},\ldots,e^{j\varphi_N^{(n)}}]^T$. 
The first-order derivative of the objective function satisfies
\begin{equation*}
\frac{\vartheta^{\left( \by_{N+2}^{(n-1)}\right)} {{{{\bar \bs}_{\bm \varphi}}^{(n)\dag}}{{\bM}_1^{\left( \by_{N+2}^{(n-1)}\right)}}{{\bar \bs}_{\bm \varphi}^{(n)}}}}{\left[ {y_{N+1}}{{{{\bar \bs}_{\bm \varphi}}^{(n)\dag}}{{\bM}_2^{\left( \by_{N+2}^{(n-1)}\right)}}{{\bar \bs}_{\bm \varphi}}^{(n)} + \vartheta^{\left( \by_{N+2}^{(n-1)}\right)} }\right] ^2}>0.
\end{equation*}
Hence, the objective function monotonically increases over $[0,1]$, and the optimal solution is just the highest feasible value, given by
\begin{eqnarray}\label{sn1}
y_{N+1}^{(n)\star}=\displaystyle{\min}\left(\displaystyle{\min_{k\in\{1,\ldots,K\}}}\left({\frac{E_I^k}{q_k^{(n)}}}\right),1\right).
\end{eqnarray}

\subsection{Receive Filter Optimization}
With reference to the $(N+2)$-th variable block, the optimization variable is the  receive filter. The update of the receive filter, i.e., $\by_{N+2}$, is tantamount to solving
	\begin{equation}\label{wsol}
		{\cal{P}}_{\by_{N+2}^{(n)}} \left \{ \begin{array}{ll}
			\displaystyle{\max_{\by_{N+2}}} &\overline{\mbox{SINR}}\left({\bs}^{\left( {n } \right)},\by_{N+2}\right) .
		\end{array} \right.
	\end{equation}
where ${\bs}^{\left( {n } \right)}=\sqrt{y_{N+1}^{\left( {n } \right)}}{\bar \bs}_{\bm \varphi}^{\left( {n } \right)}\odot\bs_0$ is the updated transmit signal at the $n$-th iteration. The optimal solution \cite{aa} is
\begin{equation}\label{w}
{\by_{N+2}^{\left( n \right)\star
}} = \frac{{{{\left( {{\bR}_\text{d}^{\left( {{{\bs}^{\left( {n } \right)}}} \right)} + {{\bR}_{\text{ind}}}} \right)}^{ - 1}}{{\bs}^{\left( {n } \right)}}}}{{{{\bs}^{\left( {n } \right)\dag}}{{\left( {{\bR}_\text{d}^{\left( {{{\bs}^{\left( {n } \right)}}} \right)} + {{\bR}_{\text{ind}}}} \right)}^{ - 1}}{{\bs}^{\left( {n } \right)}}}}.
\end{equation}
\begin{remark}\label{cgm}
Supposing $\bA= {{\bR}_{\rm d}^{\left( {{{\bs}^{\left( {n } \right)}}} \right)} + {{\bR}_{{\rm ind}}}} $ which is positive definite, the solution to the linear equations $\bA{\hat\bw}={\bs}^{\left( {n } \right)}$ can be derived via the Conjugate Gradient Method (CGM) \cite{shewchuk1994introduction}. 
\end{remark}

\subsection{Optimization Process and Computational Complexity}
Starting from a feasible solution $\bs^{(0)}$, the overall iterative design procedure is summarized in
\textbf{Algorithm \ref{alg_1}}.  
Note that in place of the standard cyclic updating rule, the Maximum Block Improvement (MBI) strategy \cite{8454321} can be used, too. As to the computational complexity of \textbf{Algorithm \ref{alg_1}}, it is linear with the number of iterations. In each iteration, it mainly includes the computation of  the code phases (step 3), the code  amplitude (step 4), and the receive filter (step 5), whose complexities are now discussed.

Step 3 includes the solution of ${\cal{P}}_p^{y^{(n)}_h},h\in\cal N$, and the	main actions to perform at each $h$ are: 
\begin{enumerate}
	\renewcommand{\labelenumi}{\theenumi.}
	\item evaluation of the problem parameters;\label{pa}
	\item calculation of $\bar{\mathcal{F}}_p$;\label{fp}
	\item determination of the optimal phase.\label{se}
\end{enumerate}

As to item \ref{pa}, the parameters to compute are   $a_{h}^{(n)}$, $b_{h}^{(n)}$, $c_{h}^{(n)}$, $d_{h}^{(n)}$, ${{z}_{k,h}^{(n)}}$, and $\tilde{c}_{k,h}^{(n)}$, $k\in\{1,\ldots,K\}$, $h\in\cal N$.
Leveraging smart recursive schemes based on suitable support variables, it follows that for a complete algorithm iteration the computational complexity of $a_h^{(n)}$ and $b_h^{(n)}$ is $O(N)$ whereas that of $c_h^{(n)}$ and $d_h^{(n)}$ is $O(N^2\log N)$.	With reference to $z_{k,h}^{(n)}$, it requires $N$ multiplications for any $k\in\{1,\ldots,K\}$ and $h\in\cal N$. Furthermore, following the same line of reasoning as in \cite{atsp}, for any $k\in\{1,\ldots,K\}$, $\tilde{c}_{k,h}^{(n)}$, $h\ge 2$ can be updated by 
 canceling out the items related to the variable $y_{h}^{(n-1)}$ from  $\tilde{c}_{k,h-1}^{(n)}$, and adding those involving $y_{h-1}^{(n)}$, 
 with a  complexity of $O(N)$; similar considerations hold true with respect to $h=1$.
 Hence, the overall
computational complexity of item \ref{pa} at each algorithm iteration is $O(N^2(\log N+K))$.  
As to item \ref{fp}, note that the determination of $\bar{\mathcal{F}}_\infty$ can be performed with a computational complexity of $O(K\log(K))$ for any $h\in\cal N$ leveraging the results of \cite{atsp}.
Besides, according to Appendix \ref{proproof}, $\bar{\mathcal{F}}_M$ can be obtained resorting to appropriate quantizations of $\hat{l}_i^\infty$ and $\hat{u}_i^\infty$, $i=1,\ldots,K_\infty$ with an extra computational complexity of $O(K)$. With reference to item \ref{se}, the unconstrained optimal solution $\phi_{g,h}^{(n)}$, for any $h\in\cal N$, can be evaluated via elementary functions using $a_{h}^{(n)}$, $b_{h}^{(n)}$, $c_{h}^{(n)}$, and $d_{h}^{(n)}$ with a computational burden of $O(1)$, while the optimal solution to the Problem ${\cal{P}}_p^{y^{(n)}_h}$ can be accomplished with a complexity at most of $O(K)$ positioning $\phi_{g,h}^{(n)}$ within $\{\hat{l}_1^p,\hat{u}_1^p,\ldots,\hat{l}_{K_p}^p,\hat{u}_{K_p}^p\}$, $p\in\{\infty,M\}$ in the correct sorted location.

With reference to the complexities of step 4 and 5, the former (mainly related to the efficient computation of $q_k^{(n)}$) is $O(K)$, the latter requires $O(N^3)$ operations, with the most demanding task given by the evaluation of ${\bR}_\text{d}^{\left(\!{{{\bs}^{\left( {n } \right)}}}\! \right)}$; in particular, 
 according to the Remark \ref{cgm}, \eqref{w} can be efficiently computed by CGM with a complexity of $O(N^2Q)$  where $Q$ is the number of iterations.  Therefore, the overall computational complexity of steps 3, 4 and 5 is  $O(N^2(K+N)+KN\log K)$. 
\begin{algorithm}
	\caption{Transceiver design method.}
	\label{alg_1}
	\textbf{Input:} Reference code $\bs_0$, phase cardinality $p$, initial feasible code $\bs^{(0)}\in\Omega_\infty$ (resp. $\Omega_M$),  $\epsilon$, minimum required improvement $\bar{\epsilon}_1$, $\beta_m$, $\bR_{\text{ind}}$,  $f_1^k$, $f_2^k$ and $E_I^k$, $k\in\{1,\ldots,K\}$;  \\
	\textbf{Output:} Optimized solution $\by^\star$;
	\begin{enumerate}
		\item {\bf Initialization}.
	\begin{itemize}
		\item Set  $n := 0$;
		\item Compute $\by_{N+2}^{(0)}$ by \eqref{w};
		\item Set $y_h^{(0)}=\arg\left( s_h^{(0)} s_{0,h}^{*}\right),h\in\mathcal{N}$;
		\item  $\by^{(0)}=\left[y_1^{(0)} ,\ldots,y_N^{(0)},\left\| \bs^{(0)}\right\| ^2,\by_{N+2}^{(0)T}\right]^T$;
		\item Compute $\chi\left(\by^{(0)} \right)$;
	\end{itemize}
	\item  Set ${n := n+1}$;\label{2}
	
	\item	
	\textbf{For} $h= 1:N$ 
	\begin{itemize}
		\item Update $y_h^{(n)}$ by \textbf{Algorithm \ref{alg_op}};
	\end{itemize}
	\textbf{End} 

	\item
	Update $y_{N+1}^{(n)}$ by \eqref{sn1};
	
	\item Update $\by_{N+2}^{(n)}$ by \eqref{w};\label{4}    
	\item  $\by^{(n)}=\left[y_1^{(n)},\ldots,y_{N+1}^{(n)},\by_{N+2}^{(n)T}\right]^T$;
	\item Compute $\chi\left(\by^{(n)} \right) $. 
	\item {If} $\left|\chi\left(\by^{(n)} \right) - \chi\left(\by^{(n-1)} \right)\right| \ \le \bar{\epsilon}_1$, stop. Otherwise, go to the step \ref{2};
	\item \textbf{Output} $\by^\star=\by^{(n)}$.

	\end{enumerate}
\end{algorithm}

\section{Heuristic Methods for Algorithm Initialization}
The solution provided by \textbf{Algorithm \ref{alg_1}} depends on the initial feasible sequence $\bs^{(0)}$. Thus, the development of a heuristic procedure ensuring high quality starting points is valuable.
To this end, an ad-hoc transceiver synthesis strategy is proposed, accounting for the spectral constraints via a penalty term in the objective function. Specifically, the following design problem is considered
\begin{equation} \label{fe_1}
 \left\{ \begin{array}{ll}
\displaystyle{ \max _{\bs,\bw} } &f(\bs,\bw)\\
\mbox{s.t.}  
&{\left\| {{{\bs}} - {{\bs}_0}} \right\|_\infty } \le \frac{\epsilon }{{\sqrt N }}\\
&|s_i|=1/\sqrt{N},i\in {\cal N}\\
&\arg (s_i)\in \Omega_p
\end{array} \right. 
\end{equation}
with 
\begin{equation}\label{f}
f(\bs,\bw)=\overline{\mbox{SINR}}(\bs,\bw) -\bar\beta\bs^\dag\bR\bs
\end{equation}
where $\bar\beta\geq 0$  is a weighting factor ruling the relative importance between the two objectives\footnote{$\bar{\beta}$ can be interpreted  as a weight which scalarizes  the multi-objective optimization problem involving the SINR and the opposite total interference energy on the licensed bands as figure of merits.}, and $\bR=\sum\limits_{k = 1}^{K} \frac{{\bR}_I^k}{E_I^k}$. 
 Thus, denoting by $\bs_1^{\star}$ an optimized solution to \eqref{fe_1}, the initial feasible code to {\bf Algorithm \ref{alg_1}}  can be constructed as
\begin{equation}
\bs_2=\frac{\bs_1^{\star}}{\sqrt{\max\left(1,\displaystyle{\max_{k\in\{1,\ldots,K\}}\bs_1^{\star \dag} \frac{{\bR}_I^k}{E_I^k}\bs_1^{\star}}\right)}}.
\end{equation}

Note that Problem \eqref{fe_1} is in general NP-hard. As a first step to handle this synthesis, let us re-parameterize  the transmit signal as $\bs=\left[e^{j\phi_1},\ldots, e^{j\phi_N}\right]^T\odot\bs_0 $; hence, Problem \eqref{fe_1} boils down to 
\begin{equation} \label{fe_2}
\left\{ \begin{array}{ll}
\displaystyle{ \max _{\bq} } &\hat{f}(\bq)\\
\mbox{s.t.}  
&\bq \in {\cal{C}}_p
\end{array} \right. 
\end{equation}
where $\bq=[\bar{\bq}^T,\widetilde{\bq}^T]^T\in{\mathbb{C}}^{2N}$ is the new optimization vector (with $\bar{\bq}=[e^{j\phi_1},\ldots,e^{j\phi_N}]^T$ and $\widetilde{\bq}=\bw$), $ {\cal{C}}_p={\cal{D}}_p^N\times{\mathbb{C}}^N\subseteq{\mathbb{C}}^{2N}$ (with ${\cal{D}}_p=\{x=e^{j\phi}|\phi\in\Psi_p\}$) is the feasible set of  $\bq$, and $\hat{f}(\bq)=f\left({\bar{\bq}}\odot\bs_0,\widetilde{\bq}\right)$, namely it denotes the objective in \eqref{f} evaluated at $\left(\bs,\bw \right)=\left({\bar{\bq}}\odot\bs_0,\widetilde{\bq}\right)$.

To solve Problem \eqref{fe_2}, the optimization framework in \cite{7366709} and \cite{8454321} is used. The main idea of this solution strategy is to partition the variables of the optimization vector $\bq$ into $I$ decoupled blocks\footnote{It is assumed that ${\cal{C}}_p={\cal{C}}_{p,1}\times\ldots\times{\cal{C}}_{p,I}$, with ${\cal{C}}_{p,i}$ the feasible set of $i$-th block of variables, $i=1,\ldots,I$.}, i.e., avoiding useless complications, $\bq=[\bq_1^T,\ldots,\bq_I^T]^T$, where $\bq_i\in{\cal{C}}_{p,i}\subseteq{\mathbb{C}}^{Q_i}$ with $\sum_{i=1}^{I}Q_i=2N$, and update one block at a time\footnote{Either an alternating or an MBI strategy can be considered.}. The procedure is reported in \textbf{Algorithm \ref{alg_2}} assuming an alternating 
optimization rule and no approximations to the feasible set. Therein, focusing on the optimization of the $i$-th variables block ($i=1,\ldots,I$), $\tilde{f}_i(\bx_i;\bq)$, for $\bq\in{\cal{C}}_p$ and $\forall \bp$, represents the surrogate function to maximize, which shares the following properties \cite{7366709,8454321}:
\begin{enumerate}
\renewcommand{\labelenumi}{(P\theenumi)}
\item $\tilde{f}_i(\bx_i;\bq)$ is continuous with respect to $\bx_i$ and $\bq$;
\item $\tilde{f}_i(\bx_i;\bq)\leq \hat{f}([\bq_1^T,\ldots,\bq_{i-1}^T,\bx_i^T,\bq_{i+1}^T,\ldots,\bq_I^T]^T)$  for $\bx_i\in{\cal{C}}_{p,i}$ and $\forall \bq\in{\cal{C}}_{p}$;
\item $\tilde{f}_i(\bq_i; [\bq_1^T,\ldots,\bq_{i-1}^T,\bq_i^T,\bq_{i+1}^T,\ldots, \bq_I^T]^T) =  \hat{f}([\bq_1^T,\ldots,\bq_{i-1}^T,\bq_i^T,\bq_{i+1}^T,\ldots, \bq_I^T]^T), \forall \bq_i\in{\cal{C}}_{p,i}, \forall \bq\in{\cal{C}}_{p}$.
\end{enumerate}
According to \cite[Proposition 2]{8454321} and \cite[Theorem 1]{7366709}, \textbf{Algorithm \ref{alg_2}} ensures a monotonic improvement of the objective, and, under some mild technical conditions, the convergence to a stationary point for any limit point of the generated sequence of the feasible solutions.

\begin{algorithm}
	\caption{Alternating Optimization for Initialization.}
	\label{alg_2}
	\textbf{Input:} A feasible starting point 
	$\bq^{(0)}=[\bq^{(0)T}_1,\cdots,\bq^{(0)T}_I]^T$, $p\in\{\infty,M\},{\cal{C}}_{p,i},i=1,\ldots,I$.  \\
	\textbf{Output:} Optimized solution $\bs_1^\star$ to Problem \eqref{fe_1};
	\begin{enumerate}
		\item Set $l=0$;
		\item  \textbf{Repeat};
		\item	
		\textbf{For} $i= 1:I$,
	\item	$\bq^{(l+1)\star}_i=\arg \displaystyle{ \max _{\bx_i\in {\cal{C}}_{p,i}} } \tilde{f}_i(\bx_i;\bq^{(l)});
$
			\item $\bq^{(l+1)}=
	\left[ \bq_1^{(l) T},\cdots,\bq_{i-1}^{(l) T},\bq^{(l+1)\star T}_i,\bq_{i+1}^{(l) T},\cdots,\bq^{(l)T}_I\right] ^T$;
	\item  ${l := l+1}$;
	\item	\textbf{End} 
		\item \textbf{Until convergence};
	\item \textbf{Output} $\bs_1^\star=[q^{(l)}_1,\ldots,q^{(l)}_N]^T\odot\bs_0$.
	\end{enumerate}
\end{algorithm}

 In the following subsections, two solution techniques leveraging the optimization framework in \textbf{Algorithm \ref{alg_2}} are proposed to tackle \eqref{fe_2}. The former considers as optimization blocks  $\bar{\bq}$ and $\widetilde{\bq}$. The latter assumes $N+1$ optimization blocks, given by ${q}_1,\ldots,{q}_N,\widetilde{\bq}$.

\subsection{Heuristic  Initialization Via Alternating Optimization with MM (HIVAM)}
The optimization variable $\bq$ is partitioned into two blocks, i.e., $\bq_1=\bar{\bq}\in{\mathbb{C}}^{N}$, $\bq_2=\widetilde{\bq}\in{\mathbb{C}}^{N}$.
At $(l+1)$-th step, the surrogate functions involved in the optimization of  $\bq_1$ and $\bq_2$ are given, respectively, by (see Appendix E in the supplemental material for details) 
\begin{equation}
\tilde{f}_1(\bx_1;\bq^{(l)})=\Re\left\{\bz^{({\bq}_1^{(l)},{\bq}_2^{(l)})\dag}\bx_1\right\}+r^{({\bq}_1^{(l)},{\bq}_2^{(l)})},
\end{equation}
\begin{equation}\label{ww}
\begin{split}
\tilde{f}_2(\bx_2;\bq^{(l)})&= \hat{f}([\bq_1^{(l)T},\bx_2^T]^T)\\&=\overline{\mbox{SINR}}\left(\bs^{(l)},{\bx_2}\right) -\bar\beta\bs^{(l)\dag}\bR\bs^{(l)},
\end{split}
\end{equation}
where $\bs^{(l)}={{\bq}_1^{(l)}}\odot\bs_0$.
Moreover, ${\cal{C}}_{p,1}={\cal{D}}_p^N$, and ${\cal{C}}_{p,2}={\mathbb{C}}^N$.

As a consequence, at $(l+1)$-th step,
\begin{enumerate}
\item  the first block is updated (the interested reader may refer to Appendix \ref{33} for technical details) as
\begin{equation}\label{q1}
q_{1,i}^{(l+1)\star}=e^{j\phi_i^{(l)\star}}.
\end{equation}
where 
\begin{itemize}
\item 
if $p=\infty$, $
\phi _i^{(l)\star} = \max \left( {\min \left( {\arg \left( { z_{l,i}} \right),\delta } \right), - \delta } \right);$
\item  
if $p=M$,
$$
	\phi _i^{(l)\star}  = \frac{2\pi}{M}
\max\left( \min\left(  \lceil{\frac{\arg(z_{l,i})M}{2\pi} } \rfloor,m_{\epsilon,M}\right) ,\alpha_\epsilon\right),
$$ with $m_{\epsilon,M}=\left\{\begin{array}{l}M / 2, \text{ if $M$ is even, and $\epsilon=2$}\\ \alpha_{\epsilon}+\omega_{\epsilon}-1,\text{else}\end{array}\right.$.
\end{itemize}
\item the second block is updated as
\begin{equation}\label{q}
{{\bq}_2^{\left( l+1 \right)\star
}} = \frac{{{{\left( {{{\bR}_\text{d}}\left( {\bs^{(l)}} \right) + {{\bR}_{\text{ind}}}} \right)}^{ - 1}}\bs^{(l)}}}{{{\bs^{(l)\dag}}{{\left( {{{\bR}_\text{d}}\left( {{\bs^{(l)}}} \right) + {{\bR}_{\text{ind}}}} \right)}^{ - 1}}{{\bs}^{\left( {l} \right)}}}}.
\end{equation} 
\end{enumerate}

%
%
\subsection{ Heuristic Initialization via Alternating Optimization with CD (HIVAC)}
 The optimization variable $\bq$ is partitioned into $N+1$ blocks, i.e., $q_i=\bar{q}_i,i\in\cal N$, $\bq_{N+1}=\widetilde{\bq}\in{\mathbb{C}}^{N}$.
	The surrogate functions are obtained restricting the objective function to $\bq_i,i=1,\ldots,N+1$, namely,
	\begin{equation*}
 \tilde{f}_i(\bx_i;\bq^{(l)})=\hat{f}([\bq^{(l)T}_{1},\ldots,\bq^{(l)T}_{i-1},\bx_i^T,\bq^{(l)T}_{i+1},\ldots,\bq^{(l)T}_{N+1}]^T).
\end{equation*}
Otherwise stated,
\begin{equation}\label{fq}
\tilde f_i\left( x_{i};\bq^{(l)}\right) =\frac{\Re\{a_s^{(l)}x_{i}\}+b_s^{(l)}}{\Re\{c_s^{(l)}x_{i}\}+d_s^{(l)}}+\Re\{f_s^{(l)}x_{i}\}+g_s^{(l)},i\in\cal N
\end{equation}
\begin{equation}\label{ww2}
\tilde{f}_i(\bx_{N+1};\bq^{(l)})=\overline{\mbox{SINR}}\left(\bs^{(l)},{\bx}_{N+1}\right) -\bar\beta\bs^{(l)\dag}\bR\bs^{(l)},
\end{equation}
with   $a_s^{(l)}$, $b_s^{(l)}$, $c_s^{(l)}$, $d_s^{(l)}$, $f_s^{(l)}$, $g_s^{(l)}$ reported in Appendix \ref{hivac}. 
Besides,  ${\cal{C}}_{p,1}=\ldots={\cal{C}}_{p,N}={\cal{D}}_p$, and ${\cal{C}}_{p,N+1}={\mathbb{C}}^N$. 

As a consequence, at $(l+1)$-th step,
	\begin{enumerate}
	\item $q_i,i\in\cal N$ can be updated (see  Appendix \ref{hivac} for details) as
  \begin{equation}
q_i^{(l+1)\star}=e^{j\phi_i^\star},
\end{equation}
with
\begin{itemize}
	\item  
$\phi_i^\star=\arg\max\limits_{\phi\in\bar{\mathcal{T}}_{\infty,i}^{(l)}}\tilde {f}_i(e^{j\phi};\bq^{(l)})$, if $p=\infty$, 
with  $\bar{\mathcal{T}}_{\infty,i}^{(l)}$  the set of at most eight points  defined in \eqref{ma} of the supplemental material;
	\item 
$\phi_i^\star=\arg\max\limits_{\phi\in\bar{\mathcal{T}}_{M,i}^{(l)}}\tilde {f}_i(e^{j\phi};\bq^{(l)})$, if $p=M$, 
where $\bar{\mathcal{T}}_{M,i}^{(l)}$ is the set  of at most fourteen points specified in \eqref{ma2} of the supplemental material;
\end{itemize}
\item the $(N+1)$-th block is optimized as
\begin{equation}\label{q2}
{{\bq}_{N+1}^{\left( l+1\right)\star
}} = \frac{{{{\left( {{{\bR}_\text{d}}\left( {\bs^{(l)}} \right) + {{\bR}_{\text{ind}}}} \right)}^{ - 1}}\bs^{(l)}}}{{{\bs^{(l)\dag}}{{\left( {{{\bR}_\text{d}}\left( {{\bs^{(l)}}} \right) + {{\bR}_{\text{ind}}}} \right)}^{ - 1}}{{\bs}^{\left( {l } \right)}}}}.
\end{equation} 
where ${\bs^{(l)}}=\left[q_1^{(l)},\ldots, q_N^{(l)}\right] ^T\odot\bs_0$.
\end{enumerate}

\section{Performance Analysis}\label{sec_performance}
This section is devoted to the performance assessment of {\bf Algorithm \ref{alg_1}} 
in terms of achievable target detectability, spectral shape
of the synthesized transmit waveform and receive filter, and cross-correlation features.  In this respect, a radar with a two-sided bandwidth of 2 MHz and a pulse length of 100 $\mu$s (leading to $N=200$) is considered. As to the reference code $\bs_0$, a unitary energy Linear Frequency Modulated (LFM) pulse  of 100 $\mu$s and a chirp rate  $K_s = (1950 \times 10^3)/(100 \times 10^{-6})$ Hz/s is employed for the continuous  phase. The $M$-quantized version of the above chirp is instead considered for the finite alphabet case, with cardinality $M${\footnote{Unless otherwise stated,  three different initializations are adopted for \textbf{{ Algorithm \ref{alg_1}}} (with $\bar \epsilon_1=10^{-4}$): HIVAM (with $\bar \epsilon_2=10^{-2}$), HIVAC (with $\bar \epsilon_3=10^{-2}$), and the third corresponds to the optimized code at the previous $\epsilon$ value. The one providing the highest $\overline{\mbox{SINR}}$ among the three synthesized sequences is picked up. As to the hyperparameter $\bar\beta$ involved in the initialization process, it has been empirically shown that $\bar\beta_1=1.8675$ and $\bar\beta_2=0.0093$ provide satisfactory performance in all the scenarios and can be used in HIVAM  and HIVAC, respectively. }}.

The covariance matrix of signal-independent interference is
\begin{eqnarray}
{{\bR}_{\text{ind}}} = \sigma_0 \bm{I} + \sum_{k=1}^{2} \frac{\sigma_{I,k}}{\Delta f_k} \bm{R}_I^k + \sum_{k=1}^{2} \sigma_{J,k}\bm{R}_{J,k},
\end{eqnarray} where $\sigma_0=0$ dB is the thermal noise level; 
$\sigma_{I,k}=10$ dB, $k=1,2$, accounts for  the energy of the $k$-th licensed
emitter operating over the normalized frequency interval $\Omega_k=[f_1^k,f_2^k]$ with $\Delta f_k = f_2^k-f_1^k$ the related bandwidth extent ($\Omega_1=[0.2112,0.2534]$, $\Omega_2=[0.5856,0.6112]$); $\bR_{J,k},k=1,2$ is the normalized
	covariance matrix of the $k$-th non-licensed active source, whose normalized carrier frequency, bandwidth, and power are denoted by 
	$f_{J,k}$, $\Delta_{J,k}$ and $\sigma_{J,k\,\text{dB}}$, respectively
	($\sigma_{J,1\,\text{dB}} =35$ dB,  $f_{J,1}= 0.823$, $\sigma_{J,2\,\text{dB}} =40$ dB,  $f_{J,2} = 0.925$, $\Delta_{J,k}=0.001$,$k=1,2$).  For the signal-dependent interference, $\beta_m=8$ dB, $m\in\{\pm1,\ldots,\pm( N-1)\}$, is assumed.
Besides, the radar probing waveform is required to
fulfill the spectral compatibility constraints corresponding to ${E_I^1}_{\text{dB}} = -30$ dB and ${E_I^2}_{\text{dB}} = -35$ dB, respectively. 

\subsection{Target Detectability Assessment}

Fig. \ref{sinr} depicts the normalized SINR achieved by the proposed transceiver design strategy versus $\epsilon$ for continuous and discrete phase codes ($M=2,4,8,16,32,64$).  As expected, a larger similarity parameter leads to a higher  $\overline{\mbox{SINR}}$ value regardless of phase cardinality being available more degrees of freedom (DOF) at the design stage. 
Besides, the finer the phase discretization, the better the performance, with $\overline{\mbox{SINR}}$ curves of the synthesized discrete phase codes closer and closer to that of the continuous phase benchmark. Finally, it is worth pointing out that  the designed sequence coincides with a scaled version of reference code, with an
energy modulation implemented to comply with the forced
spectral constraints if $\epsilon=0$ (for the continuous phase) or $\alpha_\epsilon=0$ (for the finite alphabet).  For instance,  for the binary case  $\alpha_\epsilon=0$ as long as $\epsilon\neq2$, and thus a $\overline{\mbox{SINR}}$ performance improvement just appears at  $\epsilon=2$.
	Consequently,  the similarity parameter should be carefully selected to balance the detection performance and waveform characteristics.
	\begin{figure}
	\centering
		\includegraphics[width=0.95\columnwidth]{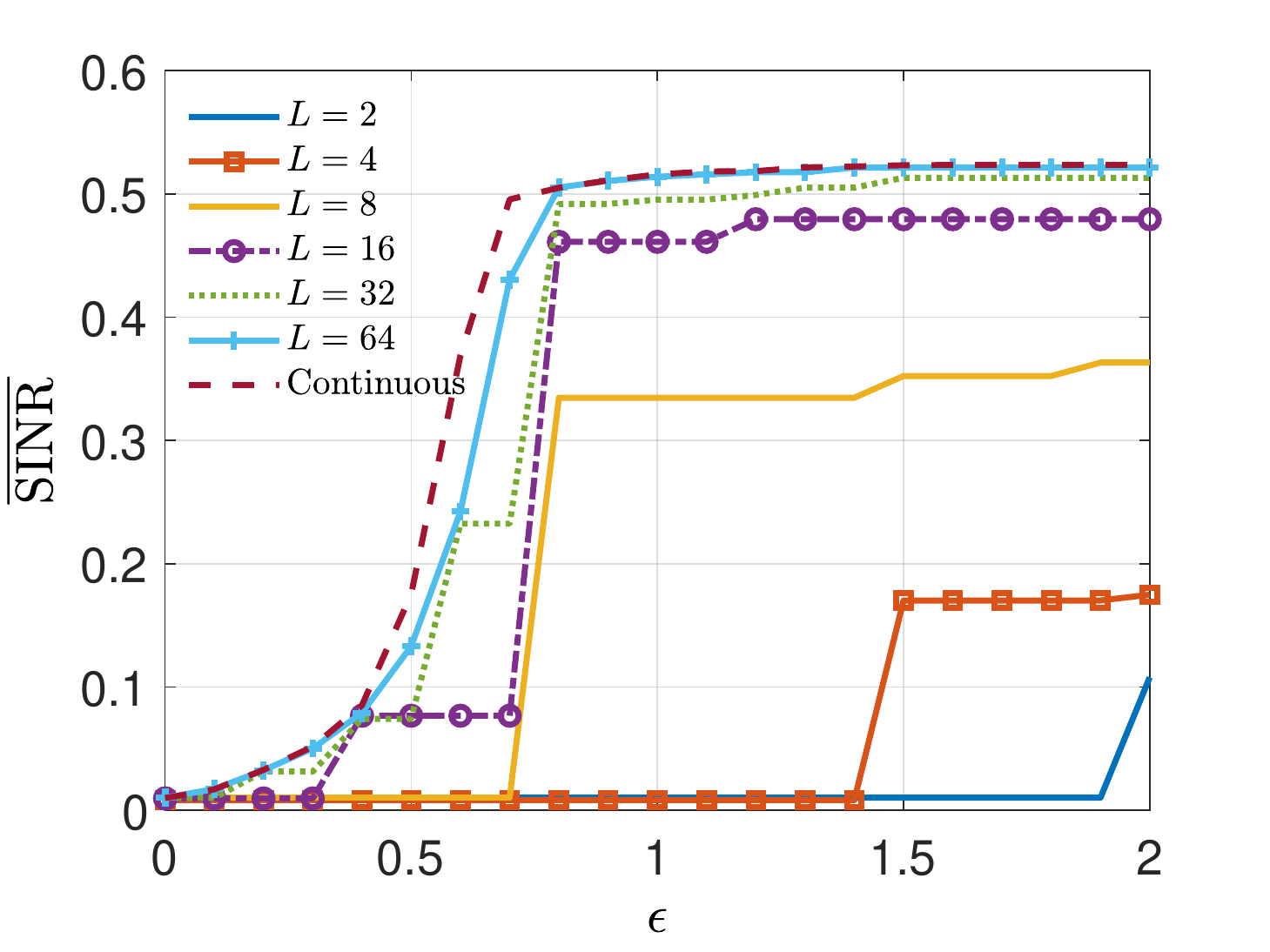}

	\caption{Achieved $\overline{\mbox{SINR}}$ versus  $\epsilon$ for  the continuous and discrete phase codes.}\label{sinr}
\end{figure}

Now, assuming ${\left\{ {{\alpha _m}} \right\}_{m \ne 0}}$ and ${\bn}$ statistically independent and Gaussian distributed as well as a Swerling 0 target, the probability of detection ($P_d$) versus $|\alpha_0|^2$, as function of $\epsilon$ and the phase discretization step, is shown in Fig. \ref{fig:pd}. Therein, the same environmental characterization as in Fig. \ref{sinr} is considered and the false alarm probability ($P_{fa}$) is set to $10^{-4}$. As expected, increasing of  similarity parameter, the cardinality of the code
alphabet, and $|\alpha_0|^2$  provide $P_d$ improvements.
	\begin{figure}
	\centering
	\includegraphics[width=0.95\columnwidth]{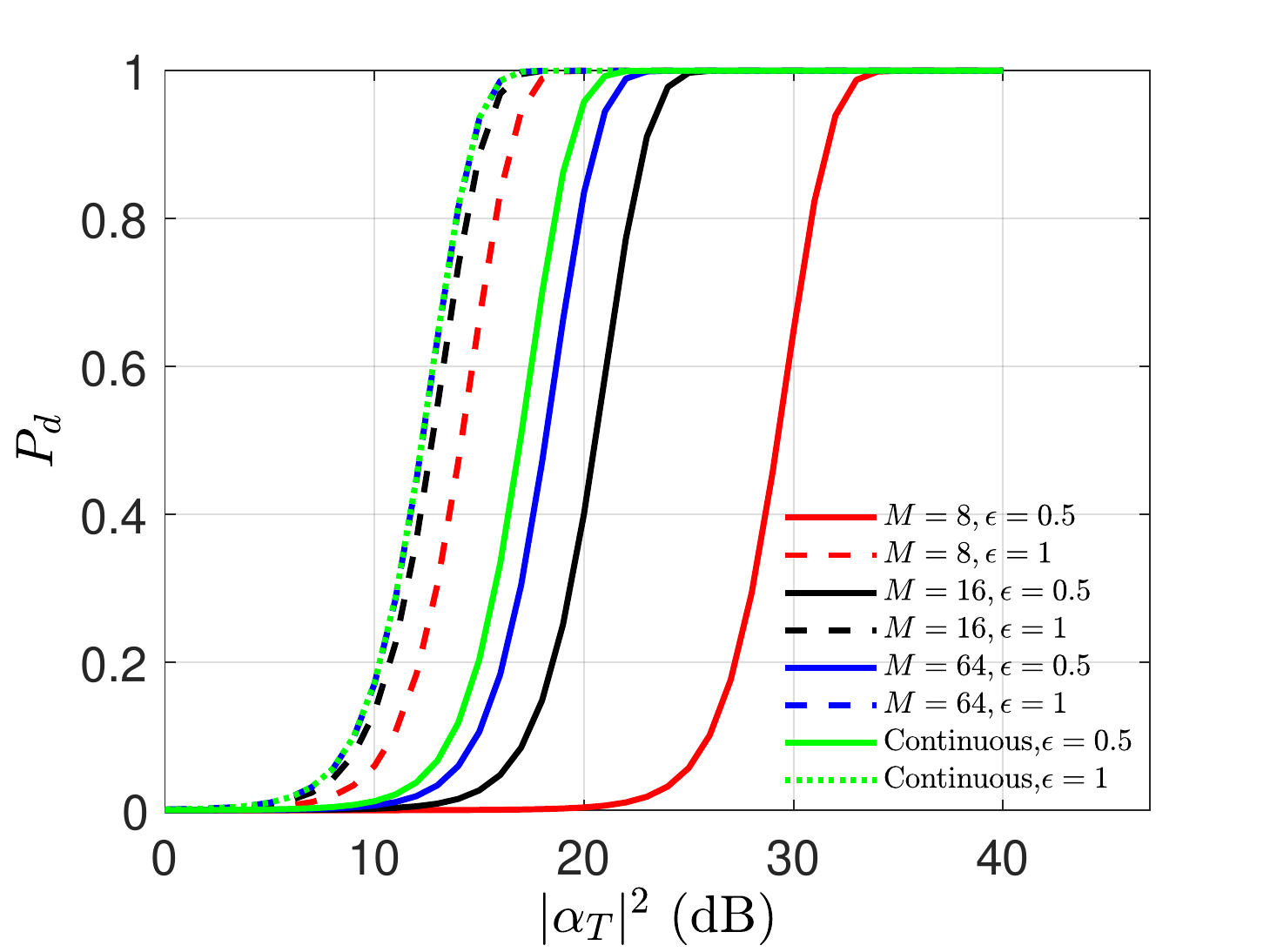}
	
	\caption{$P_d$ versus  $|\alpha_0|^2$ (in dB) for different values of $\epsilon$ and phase code cardinality, assuming $P_{fa}=10^{-4}$.}\label{fig:pd}
\end{figure}

To the best of the Authors' knowledge, the  waveform design methods currently available in the open literature are not able to solve Problem ${\cal{P}}_p,p\in\{\infty,M\}$ in its general form. Aimed at providing a complete assessment of {\bf Algorithm \ref{alg_1}}, in Subsection \ref{com} some specific instances of Problem ${\cal{P}}_p,p\in\{\infty,M\}$ are analyzed, reporting the comparison of our new method with other suitable approaches already devised in the open literature \cite{b1,b2}.

\subsection{Transceiver Characteristics}
In Fig. \ref{fig:esd}, the spectral behavior of the signals synthesized through \textbf{Algorithm \ref{alg_1}}
 (in terms of ESD versus the normalized frequency) is provided as function of the similarity parameter. Specifically,
Fig.  \ref{fig:esd} (a) refers to the continuous phase design, whereas Fig.  \ref{fig:esd} (b) is related to $M=64$. Therein, the stopbands $\Omega_i,i=1,2$ are shaded in light gray. The curves highlight the capability of the devised techniques to suitably control the amount of interference energy produced over
the shared frequency bandwidths, as required by the imposed
 spectral compatibility constraints. Furthermore,
inspection of the figures reveals that, regardless of the code cardinality, the ESD curve corresponding to $\epsilon=0.001$ almost coincides with a scaled version of the considered reference code, as a result of the
energy modulation performed to ensure cohabitation, along with the similarity requirement. Finally, an improvement in the ``useful'' energy
distribution is achieved as $\epsilon$ increases, with deeper and deeper spectral notches in correspondence of the jammed frequencies, especially for the continuous alphabet code.
\begin{figure}
	\centering
	\subfigure[Continuous phase]{
		\includegraphics[width=0.95\columnwidth]{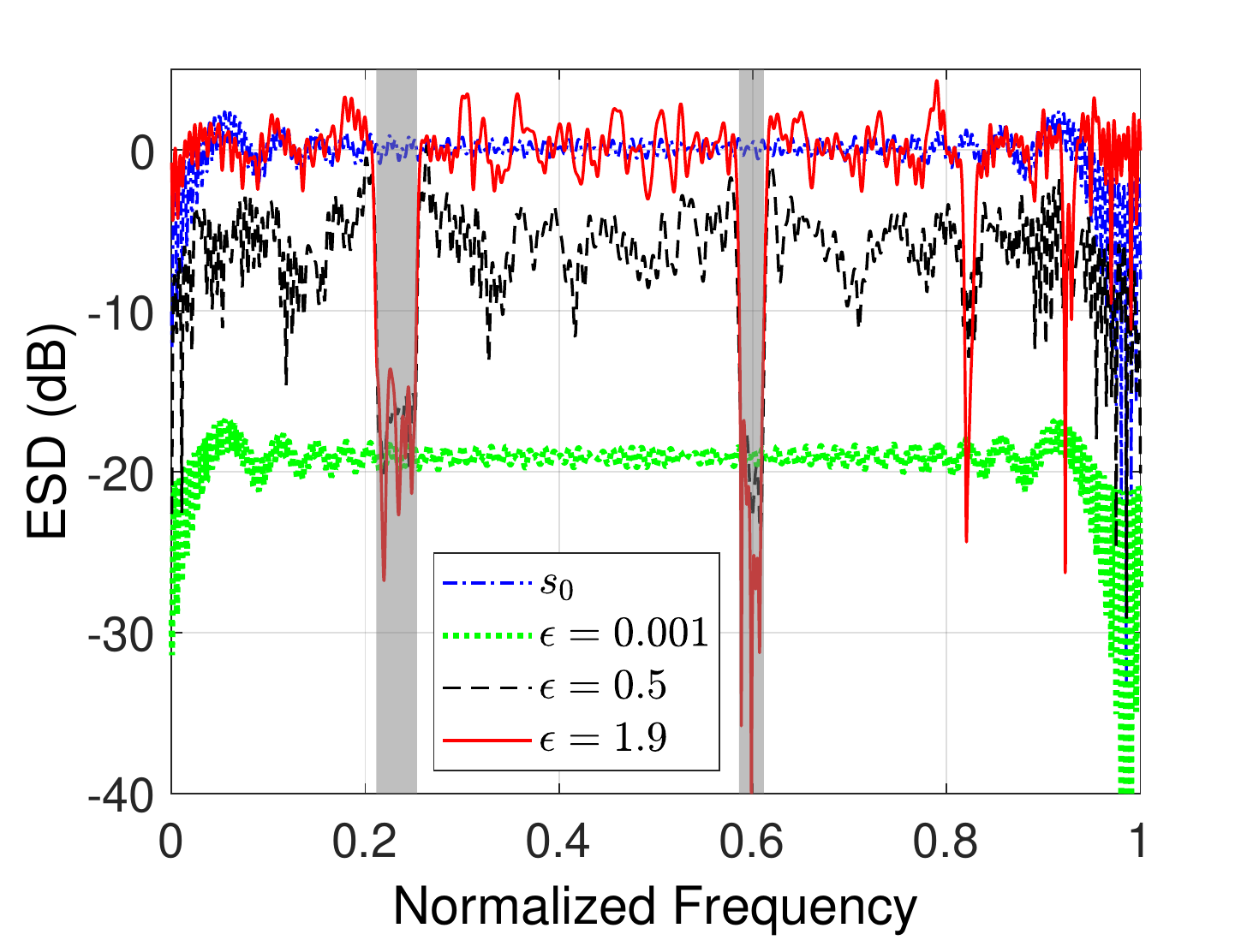}}
	\subfigure[$M=64$]{
		\includegraphics[width=0.95\columnwidth]{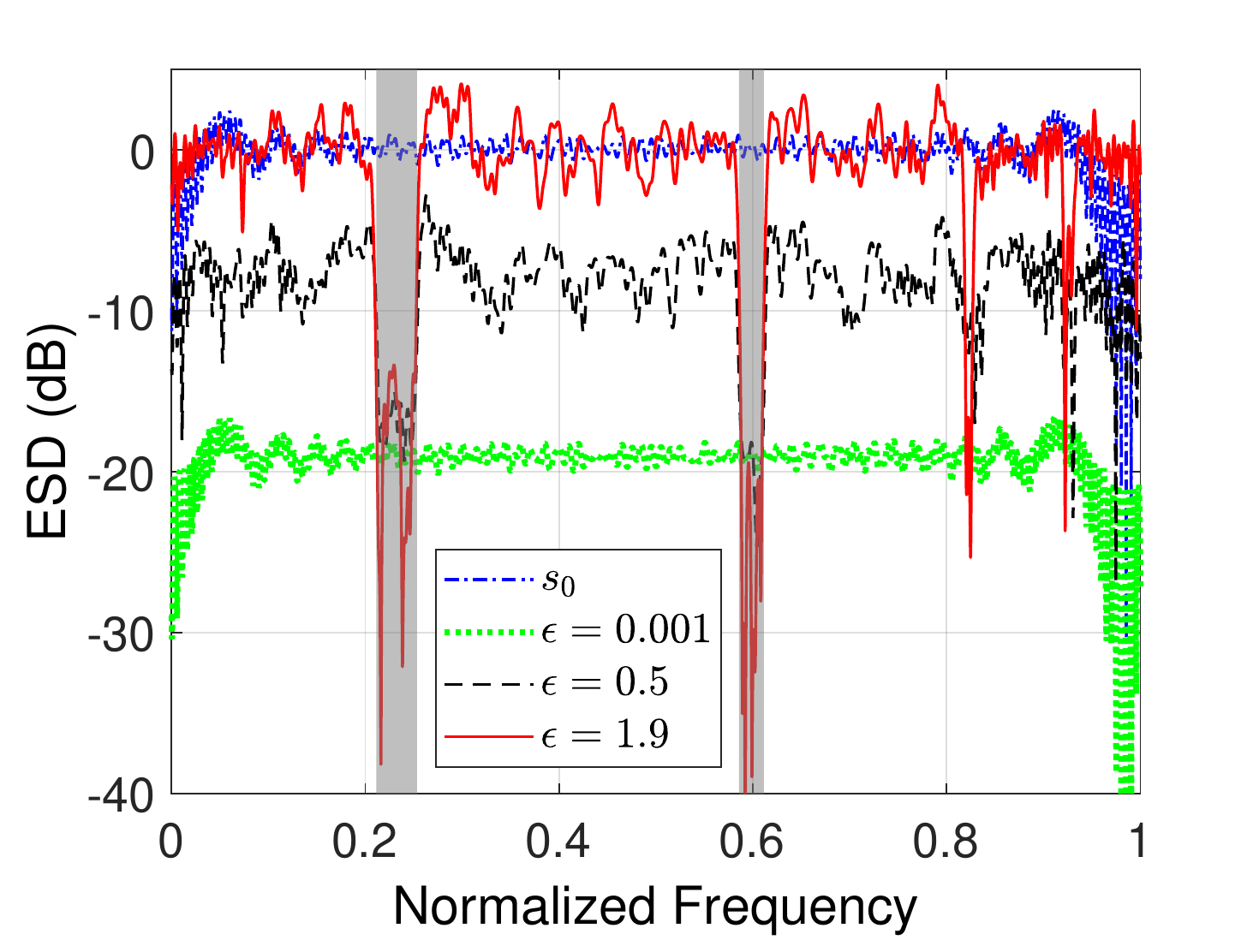}}
	
	\caption{ESD versus normalized frequency of the phase codes designed for $\epsilon=0.001,0.5,1.9$. (a) Continuous phase, (b) $M=64$.}\label{fig:esd}
\end{figure}

To shed light on the capabilities of the devised transceiver to mitigate signal-dependent disturbance, the PSL and ISL of
the Cross-Correlation Function (CCF) normalized to $|\bw^{\star\dag}\bs^{\star}|$ between
the transmit waveform and the receive filter versus the iteration index, are reported in Table II for the continuous phase and $M=64$, assuming $\epsilon= 1$ and HIVAM as the initialization method. The results show that lower and lower PSL and ISL values are achieved as the iteration step $n$ grows up for both continuous and discrete phase codes.  Otherwise stated, the devised strategy is able to iteratively improve the rejection of the signal-dependent interference.

\begin{table}[h] \label{tablepsl}
	\caption{PSL and ISL of the normalized CCFs between the transmit waveform and the receive filter, at different iterations, assuming $\epsilon=1$.}
		\centering
	\begin{tabular}{cccccclllll}
\multicolumn{6}{c}{\begin{tabular}[c]{@{}c@{}}(a) Continous phase\\ (the number of iterations  up to convergence is 15)\end{tabular}}                                                                                                                                             &                      &                      &                      &                      &                      \\ \cline{1-6}
		\multicolumn{1}{|c|}{$n$}     & \multicolumn{1}{c|}{0}      & \multicolumn{1}{c|}{2}     & \multicolumn{1}{c|}{4}     & \multicolumn{1}{c|}{8}     & \multicolumn{1}{c|}{15}     &                      &                      &                      &                      &                      \\ \cline{1-6}
		\multicolumn{1}{|c|}{PSL(dB)} & \multicolumn{1}{c|}{-18.25} & \multicolumn{1}{c|}{-19.63} & \multicolumn{1}{c|}{-20.39} & \multicolumn{1}{c|}{-21.19} & \multicolumn{1}{c|}{-21.22} &                      &                      &                      &                      &                      \\ \cline{1-6}
		\multicolumn{1}{|c|}{ISL(dB)} & \multicolumn{1}{c|}{-6.95}  & \multicolumn{1}{c|}{-7.78}  & \multicolumn{1}{c|}{-8.12}  & \multicolumn{1}{c|}{-8.47}  & \multicolumn{1}{c|}{-8.52}  &                      &                      &                      &                      &                      \\ \cline{1-6}
\multicolumn{6}{c}{\begin{tabular}[c]{@{}c@{}}(b) $M=64$\\ (the number of iterations  up to convergence is 20)  \end{tabular}}                                                                                                                                                    & \multicolumn{1}{c}{} & \multicolumn{1}{c}{} & \multicolumn{1}{c}{} & \multicolumn{1}{c}{} & \multicolumn{1}{c}{} \\ \cline{1-6}
		\multicolumn{1}{|c|}{$n$}     & \multicolumn{1}{c|}{0}      & \multicolumn{1}{c|}{2}      & \multicolumn{1}{c|}{5}     & \multicolumn{1}{c|}{10}     & \multicolumn{1}{c|}{20}     &                      &                      &                      &                      &                      \\ \cline{1-6}
		\multicolumn{1}{|c|}{PSL(dB)} & \multicolumn{1}{c|}{-18.39} & \multicolumn{1}{c|}{-20.24} & \multicolumn{1}{c|}{-20.96} & \multicolumn{1}{c|}{-21.08} & \multicolumn{1}{c|}{-21.34} &                      &                      &                      &                      &                      \\ \cline{1-6}
		\multicolumn{1}{|c|}{ISL(dB)} & \multicolumn{1}{c|}{-6.49}  & \multicolumn{1}{c|}{-7.89}  & \multicolumn{1}{c|}{-8.18}  & \multicolumn{1}{c|}{-8.32}  & \multicolumn{1}{c|}{-8.39}  &                      &                      &                      &                      &                      \\ \cline{1-6}
	\end{tabular}
\end{table}

\subsection{Comparison with Available Algorithms}\label{com}

If spectral coexistence requirements are not considered, the resulting problem can be solved by Dinkelbach-Type Algorithm (DTA) \cite{b1} and Majorized Iterative Algorithm
with the Constant Modulus and Similarity Constraint (MIA-CMSC) \cite{b2}, which can be further distinguished into MIA-CMSC1 and MIA-CMSC2 depending on different majorization methods\footnote{The parameters specified in \cite{b1} and \cite{b2} are used to implement these algorithms; specifically, $\varsigma=10^{-3}$ and $\kappa=10^{-4}$ for DTA and $\epsilon_{obj}=10^{-4}$ for MIA-CMSC1 and MIA-CMSC2.}.
Figs. \ref{fig:qp}(a)-(b) depict the $\overline{\mbox{SINR}}$  achieved by {\bf Algorithm \ref{alg_1}}, DTA, MIA-CMSC1, and MIA-CMSC2 assuming at the design stage a continuous phase and $M=32$, respectively\footnote{For all the algorithms at each $\epsilon$ both the reference sequence $\bs_0$ and the code optimized at the previous $\epsilon$ are adopted as the initializations. The one providing the highest $\overline{\mbox{SINR}}$ between the two synthesized sequences is picked up. Although DTA and MIA-CMSC are not provided in \cite{b1} and \cite{b2} with reference to discrete phase codes, their extension to encompass also this design constraint is straightforward.}.  
 Looking over the figures unveils that the proposed CD framework outperforms the counterparts. The superior performance with respect to DTA can be attributed to the slight phase-search sub-optimality of the DTA, which is intrinsic in the Dinkelbach iterative method. Instead, the performance loss incurred by the MIA-CMSC methods reasonably results from the approximation of the objective function performed in the procedure. Finally, in the design of discrete phase codes  MIA-CMSC procedures experience a higher performance gap with respect to \textbf{Algorithm \ref{alg_1}} than that observed for the continuous phase instance.

 To assess the convergence property and computational complexity of the different methods,
 Figs. \ref{fig:convergence}(a)-(b) depict the $\overline{\mbox{SINR}}$  versus CPU time with $\epsilon=0.8$ for the continuous phase and $M=32$, respectively.  The reference code is used for this case study to initialize all the algorithms. As expected, the objective function of all the methods monotonically increases. Inspection of the curves shows that \textbf{Algorithm \ref{alg_1}} is substantially capable of obtaining larger $\overline{\mbox{SINR}}$ values than the counterparts for any given time budgets, which provides a practical proof of the CD framework efficiency. Specifically, \textbf{Algorithm \ref{alg_1}} outperforms the counterparts, but for the continuous phase code synthesis and up to 0.38s where MIA-CMSC1 is better.
\begin{figure}
	\centering
	\subfigure[Continuous phase]{
		\includegraphics[width=0.9\columnwidth]{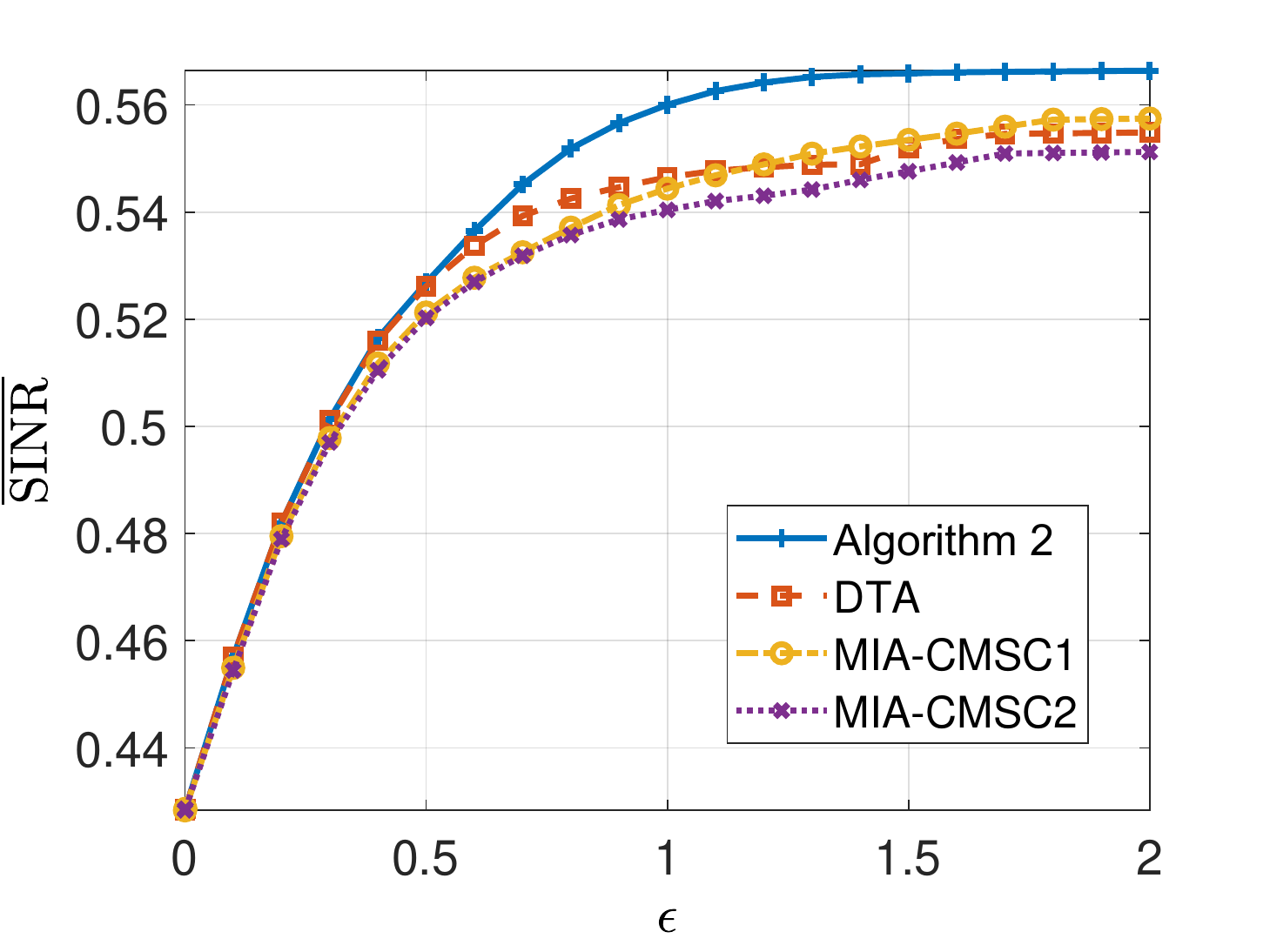}}	\subfigure[$M=32$]{
		\includegraphics[width=0.9\columnwidth]{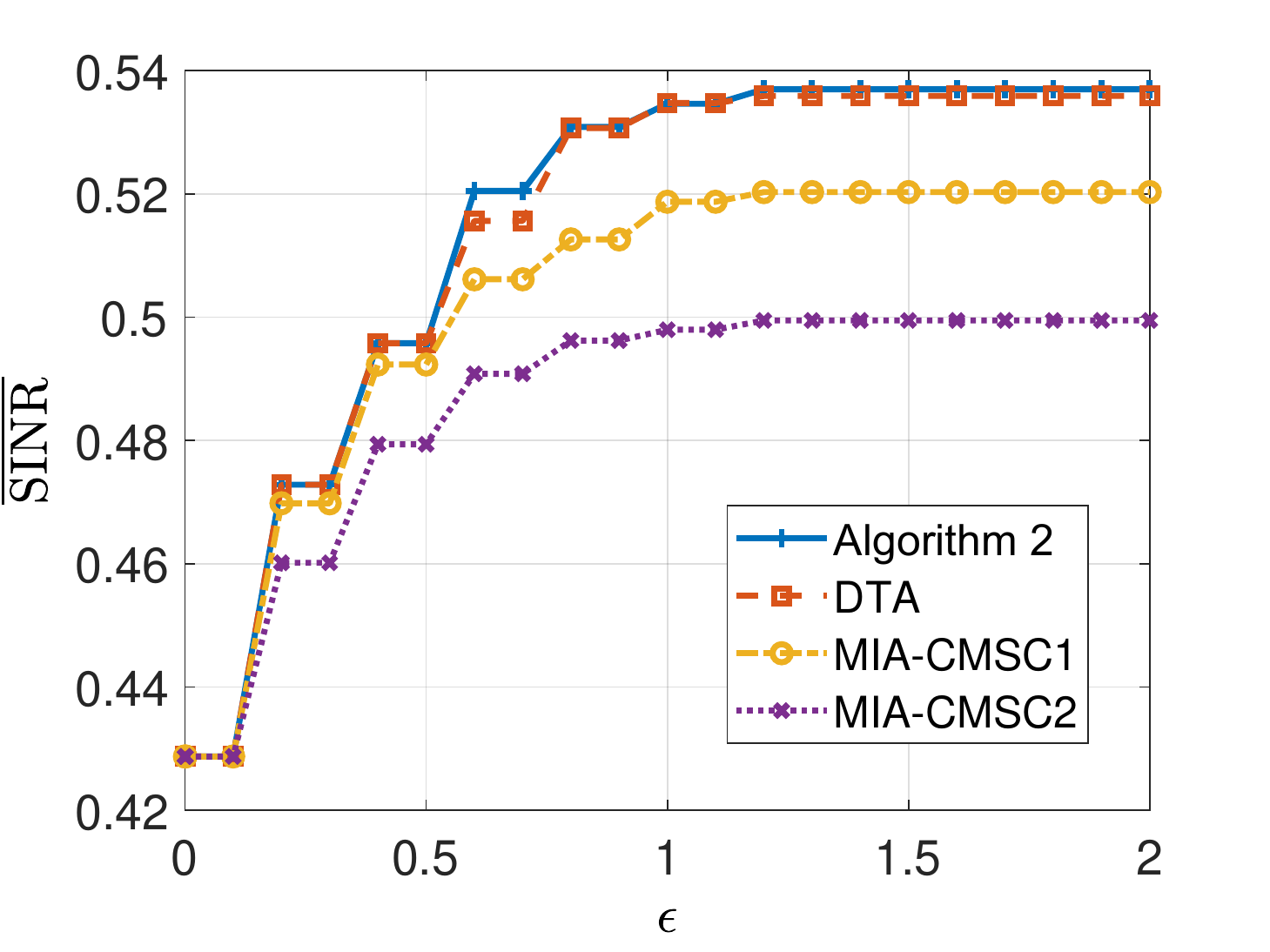}}
	\caption{Achieved $\overline{\mbox{SINR}}$ versus $\epsilon$ without spectral constraints. (a) Continuous phase, (b) $M=32$.}\label{fig:qp}
\end{figure}

 \begin{figure}
	\centering
	\subfigure[Continuous phase]{
		\includegraphics[width=0.9\columnwidth]{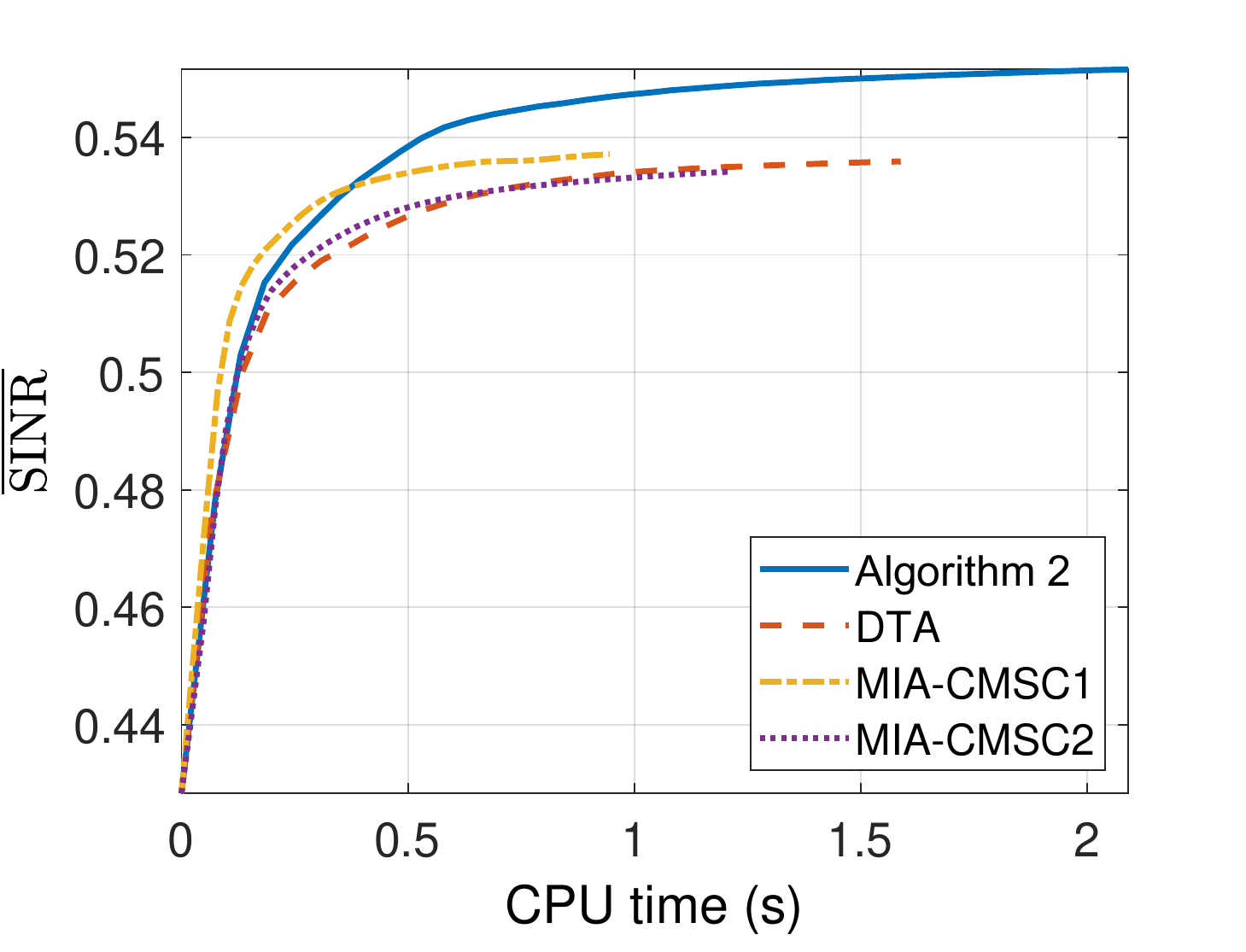}}
	\subfigure[$M=32$]{
		\includegraphics[width=0.9\columnwidth]{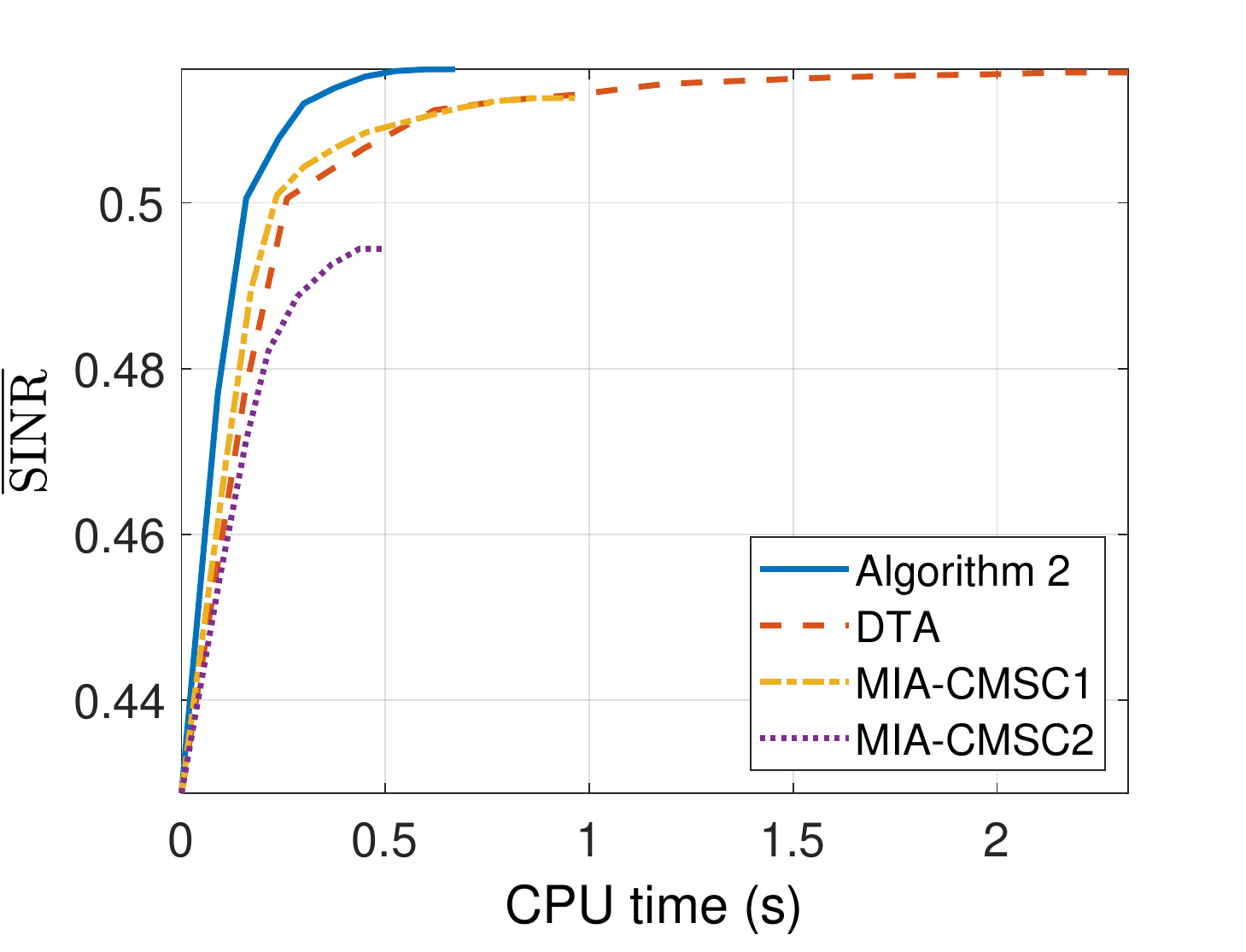}}
	\caption{ $\overline{\mbox{SINR}}$ versus CPU time with $\epsilon=0.8$. (a) Continuous phase, (b) $M=32$.}\label{fig:convergence}
\end{figure}

%
To proceed further, let us observe that, neglecting the constant envelope as well as the similarity  requirements, ${\cal{P}}_c$ with $K=2$ can be solved via  Majorized Iterative Algorithm with the Spectrum Compatibility Constraint for Local design (MIA-SCCL)\cite{b2}, as well as a variant of the algorithm in \cite{aa}, denoted as SemiDefinite
Programming-based Design (SDPD)\footnote{The  parameters specified in \cite{b2} and \cite{aa} are used for implementing these algorithms; specifically, $\epsilon_{slc}=10^{-8}$ and $\epsilon_{obj}=10^{-4}$ for  MIA-SCCL and  $\zeta=10^{-4}$ for SDPD.}.  The  $\overline{\mbox{SINR}}$, the computational time, and the PAR \cite{8454704} of the sequences synthesized via SDPD, MIA-SCCL and {\bf Algorithm \ref{alg_1}}  are summarized in the Table \ref{table22}. The sequence devised via HIVAC is adopted as the initialization for all the algorithms, which costs 1.8724s for computation. The proposed CD algorithm is capable of devising the radar transceiver with a shorter running time than the counterparts. However, as expected, it experiences a $\overline{\mbox{SINR}}$ loss with respect to SDPD and MIA-SCCL. Indeed, SDPD and MIA-SCCL may capitalize on code amplitude variation to boost the radar performance, at the price of larger PAR values, which in turn demand for more sophisticated amplifiers.

\begin{table}
	\centering
	\caption{$\overline{\mbox{SINR}}$, corresponding computation time and PAR for different algorithms with multi-spectral constraints.}
	\begin{tabular}{|c|c|c|c|}
		\hline
		Algorithm & $\overline{\mbox{SINR}}$  & Time (s)& PAR    \\ \hline
		SDPD      & 0.6088                   & 855.4809& 2.4520  \\ \hline
		MIA-SCCL  & 0.6004                    & 6.6539 & 3.3589 \\ \hline
		CD        & 0.5120                 & 1.7256 & 1       \\ \hline
	\end{tabular}
\label{table22}
\end{table}

\section{Conclusion}\label{sec_conclusion}
The synthesis of radar transceivers  in signal-dependent interference and spectrally
contested-congested environments has been addressed in this paper. Specifically, assuming constant modulus signals  with either continuous or finite alphabet phases, 
 the design has aimed at maximizing the SINR at the output of the receive filter while ensuring cohabitation with surrounding RF systems via multiple spectral constraints. 
Furthermore, a similarity constraint has been forced on the probing signal
to bestow attractive waveform characteristics. Hence,
 resorting to the CD framework,  iterative procedures (with ensured convergence properties) have been conceived to synthesize optimized radar waveforms and receive filters. 
 Each step requires the solution of a possible non-convex optimization problem whose global optimal point is obtained in closed-form, regardless of the phase codes cardinality.
 Remarkably, the overall computational complexity of the proposed algorithms is
	polynomial with respect to both the code length and the number of licensed emitters. 

At the analysis stage the performance of the synthesized transceiver pair has been assessed in terms of different metrics, i.e.,  SINR as well as detection probability, spectral shape, and CCF features. The results have clearly shown that the proposed framework is capable of mitigating diverse interfering scenarios, while ensuring cohabitation with overlaid licensed systems. Besides, some comparisons with counterparts available in the open literature have been provided, showing interesting performance gains, in terms of achieved SINR and/or computational time.

Future research tracks might concern the extension of this framework to  MIMO radar systems 
where spatial diversity may induce other degrees of freedom which can be optimized to improve the
performance of the overall method.

\section*{\center{ \textnormal{ACKNOWLEDGMENT}}}
This research activity has been conducted during the visit
of Jing Yang at the University of Napoli ``Federico II" DIETI under the local supervision of Prof. A. Aubry and Prof. A. De Maio. The Authors
would like to thank Dr. Linlong Wu, Prof. Prabhu Babu, and Prof. Daniel P. Palomar for
providing the MATLAB code for MIA.

\appendix
\subsection{Derivation of Problem \eqref{phase}}\label{A}
Before proceeding further, let us introduce the following lemma.
\begin{lemma}\label{le1}
	For any known matrix $\bR\in {\mathbb{H}}^{N}$,  $g\left( \ba\right) =\ba^\dag\bR\ba$ with $\ba\in {\mathbb{C}}^{N}$ can be recast as a function of a specific entry $a_h,h\in\cal{N}$, i.e.,
	\begin{equation}
	\begin{split}
	g\left({a}_h;{\ba}_{-h}\right)=&(\hat {\ba}_{-h}+{a}_h\bm e_{h})^\dag\bR(\hat {\ba}_{-h}+{a}_h\bm e_{h})\\
	=&\bm e_{h}^\dag{\bR}\bm e_{h}|a_h^{(n)}|^2+2\Re\left\{
	\hat {\ba}^\dag_{-h}\bR\bm e_{h}{a}_h \right \}\\
	&+	\hat {\ba}^\dag_{-h}\bR\hat {\ba}_{-h},
	\end{split}
	\end{equation}
	where  $\hat {\ba}_{-h}=\ba-{a}_h\bm e_{h}\in {\mathbb{C}}^{N}$, and $\ba_{-h}=[a_1,\ldots,a_{h-1},a_{h+1},\ldots,a_N]^T\in {\mathbb{C}}^{N-1}$.
\end{lemma}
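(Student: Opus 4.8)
\textbf{Proof plan for Lemma \ref{le1}.}
The plan is to expand the quadratic form $g(\ba)=\ba^\dag\bR\ba$ after writing $\ba$ as the sum of its $h$-th coordinate contribution and the rest. Concretely, I would start from the identity $\ba=\hat\ba_{-h}+a_h\bm e_h$, which holds by construction since $\hat\ba_{-h}$ is obtained from $\ba$ by zeroing out its $h$-th entry and $a_h\bm e_h$ restores it. Substituting this decomposition into $g$ gives
\[
g(a_h;\ba_{-h})=(\hat\ba_{-h}+a_h\bm e_h)^\dag\bR(\hat\ba_{-h}+a_h\bm e_h),
\]
which is exactly the first displayed line of the claim.

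Next I would multiply out the four resulting terms. Using bilinearity (conjugate-linearity in the first argument, linearity in the second) one gets
\[
g=\hat\ba_{-h}^\dag\bR\hat\ba_{-h}+a_h\,\hat\ba_{-h}^\dag\bR\bm e_h+a_h^*\,\bm e_h^\dag\bR\hat\ba_{-h}+|a_h|^2\,\bm e_h^\dag\bR\bm e_h.
\]
The key observation is that the two cross terms are complex conjugates of each other: since $\bR\in\mathbb H^N$, $(\hat\ba_{-h}^\dag\bR\bm e_h)^*=\bm e_h^\dag\bR^\dag\hat\ba_{-h}=\bm e_h^\dag\bR\hat\ba_{-h}$, so $a_h\hat\ba_{-h}^\dag\bR\bm e_h+a_h^*\bm e_h^\dag\bR\hat\ba_{-h}=2\Re\{\hat\ba_{-h}^\dag\bR\bm e_h\,a_h\}$. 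Collecting terms then yields precisely
\[
g(a_h;\ba_{-h})=\bm e_h^\dag\bR\bm e_h\,|a_h|^2+2\Re\{\hat\ba_{-h}^\dag\bR\bm e_h\,a_h\}+\hat\ba_{-h}^\dag\bR\hat\ba_{-h},
\]
which is the asserted expression. (The superscript $(n)$ on $|a_h^{(n)}|^2$ in the statement is merely bookkeeping from the surrounding iteration and can be ignored here.)

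This lemma is essentially a routine algebraic identity, so I do not expect a genuine obstacle; the only point requiring a line of care is the Hermitian-symmetry argument that merges the two cross terms into a single $2\Re\{\cdot\}$ term, and noting that $\bm e_h^\dag\bR\bm e_h=\bR(h,h)$ is real so that the coefficient of $|a_h|^2$ is real as it must be. The value of the lemma lies in how it is used downstream: applying it to $\bR=\bM_1^{(\bw)}$, $\bM_2^{(\bw)}$, and each $\overline\bR_I^k$ with $\ba=\bar\bs_{\bm\varphi}$ and $|a_h|^2=1$ reduces every quadratic form appearing in $\chi$ and in the spectral constraints to an affine-in-$e^{jy_h}$ expression of the form $\Re\{(\cdot)e^{jy_h}\}+(\cdot)$, which is exactly what produces the linear-fractional objective and the half-plane constraints in Problem \eqref{phase}.
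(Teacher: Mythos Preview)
Your proof is correct and complete. The paper itself does not give a proof of this lemma at all; it is stated as an elementary identity and used immediately, so your expansion via $\ba=\hat\ba_{-h}+a_h\bm e_h$ together with the Hermitian-symmetry argument for the cross terms is exactly what is implicitly assumed.
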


According to \textbf{Lemma} \ref{le1} and exploiting $|{\bar s}_i|^2=|e^{jy_i}|^2=1,i\in\cal N$ as well as $y_{N+1}\ge0$, $\chi\left(y_h;{\by}_{-h}^{(n)}\right)$ w.r.t. $y_h$ can be expressed as
\begin{flalign}
	\chi\left(y_h;{\by}_{-h}^{(n)}\right)=\frac{\Re\{a_{h}^{(n)}e^{jy_h}\}+b_h^{(n)} }{\Re\{c_h^{(n)}e^{jy_h}\}+d_{h}^{(n)} },h\in{\cal{N}},
\end{flalign}
where 
\begin{itemize}
	\item
	$a_{h}^{(n)}=2{\hat \bs}^{(n)\dag}_{-h}\bM_1^{\left( \by_{N+2}^{(n-1)}\right)} \bm e_h$,
	\item $b_h^{(n)}={\hat \bs}^{(n)\dag}_{-h}\bM_1^{\left( \by_{N+2}^{(n-1)}\right)}{\hat \bs}^{(n)}_{-h}+\bm e_{h}^\dag\bM_1^{\left( \by_{N+2}^{(n-1)}\right)}\bm e_{h}$,
	\item
	$c_h^{(n)}=2{\hat \bs}^{(n)\dag}_{-h}\bM_2^{\left( \by_{N+2}^{(n-1)}\right)}\bm e_h$,
	\item
	$d_{h}^{(n)}={\hat \bs}^{(n)\dag}_{-h}\bM_2^{\left( \by_{N+2}^{(n-1)}\right)}{\hat \bs}^{(n)}_{-h}+\bm e_{h}^\dag\bM_2^{\left( \by_{N+2}^{(n-1)}\right)}\bm e_{h}+{\vartheta^{\left( \by_{N+2}^{(n-1)}\right)}}/{{y_{N+1}^{(n-1)}}}$, 
\end{itemize}
with ${\hat\bs}^{(n)}_{-h}=\left[e^{jy_1^{(n)}},\ldots,e^{jy_{h-1}^{(n)}},0,e^{jy_{h+1}^{(n-1)}},\ldots, e^{jy_{N}^{(n-1)}}\right] ^T$.

Similarly, the spectral constraints can be transformed as
\begin{equation}\label{spc}
\begin{split}
{y_{N+1}^{(n-1)}}\left(\bar{z}_{k,h}^{(n)}+ \Re\left\{z_{k,h}^{(n)}\,e^{jy_h}\right\}\right)\le E_I^k,k=1,\ldots,K,
\end{split}
\end{equation}
where $\bar{z}_{k,h}^{(n)}={\hat \bs}^{(n)\dag}_{-h}\overline{{\bR}}_{I}^k{\hat \bs}^{(n)}_{-h}+\bm e_{h}^\dag\overline{{\bR}}_{I}^k\bm e_{h}$, $z_{k,h}^{(n)}=2{\hat \bs}^{(n)\dag}_{-h}\overline{{\bR}}_{I}^k\bm e_{h}$. As a result, being $y_{N+1}\ge0$, the inequalities in \eqref{spc} are  tantamount to $\Re\{z_{k,h}^{(n)}e^{jy_h}\}\le \tilde{c}_{k,h}^{(n)}, k=1,\ldots,K$, where $\tilde{c}_{k,h}^{(n)}=\frac{E_I^k}{ {y_{N+1}^{(n-1)}}}-{\bar{z}_{k,h}^{(n)}} \in \mathbb{R}$.

\subsection{Monotonicity Study of $\chi\left(y_h;{\by}_{-h}^{(n)}\right)$ and Evaluation of $\bar{\mathcal{F}}_p$}\label{proproof}
Before proceeding further, let us observe that both the characterization  of the objective function monotonicities and the feasible set derivation can be performed by means of a change of variable, that defines a one-to-one monotonically increasing mapping. To this end, let us consider $t=\tan(y_h/2)$. In the transformed domain, the objective function of Problem \eqref{phase} can be rewritten as
\begin{equation}\label{rt}
R_h^{(n)}(t)=\frac{a_{1,h}^{(n)}t^2+b_{1,h}^{(n)}t+c_{1,h}^{(n)}}{a_{2,h}^{(n)}t^2+b_{2,h}^{(n)}t+c_{2,h}^{(n)}},
\end{equation}
where
\begin{equation}\label{abc}
\begin{split}
a_{1,h}^{(n)}=b_h^{(n)}-\Re\{a_{h}^{(n)}\},&a_{2,h}^{(n)}=d_{h}^{(n)}-\Re\{c_h^{(n)}\},\\
b_{1,h}^{(n)}=-2\Im\{a_{h}^{(n)}\},&b_{2,h}^{(n)}=-2\Im\{c_h^{(n)}\},\\
c_{1,h}^{(n)}=b_h^{(n)}+\Re\{a_{h}^{(n)}\},&c_{2,h}^{(n)}=d_{h}^{(n)}+\Re\{c_h^{(n)}\}.
\end{split}
\end{equation}

Similarly, $ \Re\{z_{k,h}^{(n)}e^{jy_h}\}\le \tilde{c}_{k,h}^{(n)}$ can be cast as
\begin{equation}\label{sp}
\bar{a}_{k,h}^{(n)}t^2+\bar{b}_{k,h}^{(n)}t+\bar{c}_{k,h}^{(n)}\le0,
\end{equation}
where 
\begin{eqnarray}\label{abck}
\bar{a}_{k,h}^{(n)}&=&-\Re\{z_{k,h}^{(n)}\}- \tilde{c}_{k,h}^{(n)},\nonumber\\
\bar{b}_{k,h}^{(n)}&=&-2\Im\{z_{k,h}^{(n)}\},\\
\bar{c}_{k,h}^{(n)}&=&\Re\{z_{k,h}^{(n)}\}- \tilde{c}_{k,h}^{(n)}.\nonumber
\end{eqnarray}


Let us first characterize the behavior of the objective function $R_h^{(n)}(t)$.
To this end, note that ${a_{2,h}^{(n)}t^2+b_{2,h}^{(n)}t+c_{2,h}^{(n)}}>0,\forall t$, which implies that either $a_{2,h}^{(n)}>0$ and $\sqrt{b_{2,h}^{(n)2}-4a_{2,h}^{(n)}c_{2,h}^{(n)}}\le 0$, or $a_{2,h}^{(n)}=b_{2,h}^{(n)}=0$ and $c_{2,h}^{(n)}>0$.
As to the first-order derivation of $R_h^{(n)}(t)$, it is given by
\begin{equation}\label{deri}
R_h^{(n)'}(t)=\frac{\hat{d}_h^{(n)}t^2+\hat{e}_h^{(n)}t+\hat{f}_h^{(n)}}{\left({a_{2,h}^{(n)}t^2+b_{2,h}^{(n)}t+c_{2,h}^{(n)}}\right)^2},
\end{equation}
where $\hat{d}_h^{(n)}=a_{1,h}^{(n)}b_{2,h}^{(n)}-a_{2,h}^{(n)}b_{1,h}^{(n)}$, $\hat{e}_h^{(n)}=2(a_{1,h}^{(n)}c_{2,h}^{(n)}-a_{2,h}^{(n)}c_{1,h}^{(n)})$ and $\hat{f}_h^{(n)}=b_{1,h}^{(n)}c_{2,h}^{(n)}-b_{2,h}^{(n)}c_{1,h}^{(n)}$. According to \textbf{Lemma} \ref{lemma} (reported below), if $\hat{d}_h^{(n)}\neq0$, $R_h^{(n)}(t)$ admits two stationary points. In particular, $R_h^{(n)}(t)$ exhibits the following behavior:
\begin{itemize}
	\item if $\hat{d}_h^{(n)}=0$, it follows that
	\begin{itemize}
		\item if $\hat{e}_h^{(n)}=0$, it follows that $\hat{f}_h^{(n)}=0$, implying that $R_h^{(n)}(t)$ is a constant function;
		\item if $\hat{e}_h^{(n)}>0$ (see Fig. \ref{fig} (a) for a notional example), $R_h^{(n)}(t)$ is strictly decreasing over $t<-\hat{f}_h^{(n)}/\hat{e}_h^{(n)}$, and strictly increasing over $t>-\hat{f}_h^{(n)}/\hat{e}_h^{(n)}$.
		Thus, the minimum point is $t_{s,h}^{(n)}=-\hat{f}_h^{(n)}/\hat{e}_h^{(n)}$, and the finite supremum is achieved at $\pm\infty$.
		\item if $\hat{e}_h^{(n)}<0$ (see Fig. \ref{fig} (b) for a notional example), $R_h^{(n)}(t)$ is strictly increasing over $t<-\hat{f}_h^{(n)}/\hat{e}_h^{(n)}$, and strictly decreasing over $t>-\hat{f}_h^{(n)}/\hat{e}_h^{(n)}$. Thus, the maximum point is $t_{g,h}^{(n)}=-\hat{f}_h^{(n)}/\hat{e}_h^{(n)}$, and the finite infimum is achieved at $\pm\infty$;
	\end{itemize}
	\item if $\hat{d}_h^{(n)}\neq0$, the roots of \eqref{deri} are ruled by the discriminant $\Delta_h^{(n)}=\hat{e}_h^{(n)2}-4\hat{d}_h^{(n)}\hat{f}_h^{(n)}$, which are given by $t_{g,h}^{(n)}=\frac{-\hat{e}_h^{(n)}-\sqrt{\Delta_h^{(n)}}}{2\hat{d}_h^{(n)}}$,  $t_{s,h}^{(n)}=\frac{-\hat{e}_h^{(n)}+\sqrt{\Delta_h^{(n)}}}{2\hat{d}_h^{(n)}}$;
	\begin{itemize}
		\item
		if $\hat{d}_h^{(n)}>0$ (see Fig. \ref{fig} (c) for a notional example), $R_h^{(n)}(t)$ is strictly increasing over $t<t_{g,h}^{(n)}$, strictly decreasing over $t_{g,h}^{(n)}<t<t_{s,h}^{(n)}$, and strictly increasing over $t>t_{s,h}^{(n)}$;
		\item
		if $\hat{d}_h^{(n)}<0$ (see Fig. \ref{fig} (d) for a notional example), $R_h^{(n)}(t)$ is strictly decreasing over $t<t_{s,h}^{(n)}$, strictly increasing over  $t_{s,h}^{(n)}<t<t_{g,h}^{(n)}$, and strictly decreasing over $t>t_{g,h}^{(n)}$.
	\end{itemize}
	
\end{itemize}
As already pointed out, since $y_h\in]-\pi,\pi[$ is mapped to $t\in]-\infty,\infty[$ by a strictly increasing function, the monotonicities  of $\chi\left(y_h;{\by}_{-h}^{(n)}\right),y_h\in]-\pi,\pi[$ can be directly derived starting from those of $R_h^{(n)}(t)$. For instance, 
supposing $\hat{d}_h^{(n)}>0$, $\chi\left(y_h;{\by}_{-h}^{(n)}\right)$ is strictly increasing over $-\pi<y_h<\phi_{g,h}^{(n)}$ with $\phi_{g,h}^{(n)}=2\arctan (t_{g,h}^{(n)})$, then decreasing over $\phi_{g,h}^{(n)}<y_h<\phi_{s,h}^{(n)}$ with $\phi_{s,h}^{(n)}=2\arctan (t_{s,h}^{(n)})$, and increasing over $\phi_{s,h}^{(n)}<y_h<\pi$.
As a result, the maximum and minimum points $\phi_{g,h}^{(n)}$, $\phi_{s,h}^{(n)}$ of the objective function $\chi\left(y_h;{\by}_{-h}^{(n)}\right)$ are\footnote{Note that if $\hat d_h^{(n)} = 0$ and $\hat e_h^{(n)} = 0$, $\chi\left(y_h;{\by}_{-h}^{(n)}\right)$ is a constant function, thus the maximum and minimum points are assigned as $\phi_{g,h}^{(n)}=-\pi$ and $\phi_{s,h}^{(n)}=0$.}
\begin{equation*}
	\phi_{g,h}^{(n)}=\left\{ \begin{array}{ll}
		2\arctan \left( { - {\hat f_h^{(n)}}/{\hat e_h^{(n)}}} \right),&\text{If }\hat d_h^{(n)} = 0,\hat e_h^{(n)} < 0,\\
		- \pi,&\text{If }\hat d_h^{(n)} = 0,\hat e_h^{(n)} > 0,\\
		2\arctan \left( \frac{-\hat{e}_h^{(n)}-\sqrt{\Delta_h^{(n)}}}{2\hat{d}_h^{(n)}}\right),&\text{If }\hat d_h^{(n)} \neq 0,
	\end{array} \right.
\end{equation*}
and
\begin{equation*}
	\phi_{s,h}^{(n)}=\left\{ \begin{array}{ll}
		- \pi ,&\text{If }\hat d_h^{(n)} = 0,\hat e_h^{(n)} < 0,\\
		2\arctan \left( { - \hat f_h^{(n)}/\hat e_h^{(n)}} \right),&\text{If }\hat d_h^{(n)} = 0,\hat e_h^{(n)} > 0,\\
		2\arctan \left( \frac{-\hat{e}_h^{(n)}+\sqrt{\Delta_h^{(n)}}}{2\hat{d}_h^{(n)}}\right),&\text{If }\hat d_h^{(n)} \neq 0.
	\end{array} \right.
\end{equation*}

\begin{figure}
	\centering
	\subfigure[ $\hat{e}_h^{(n)}>0$ and $\hat{d}_h^{(n)}=0$,]{
		\includegraphics[width=0.475\columnwidth]{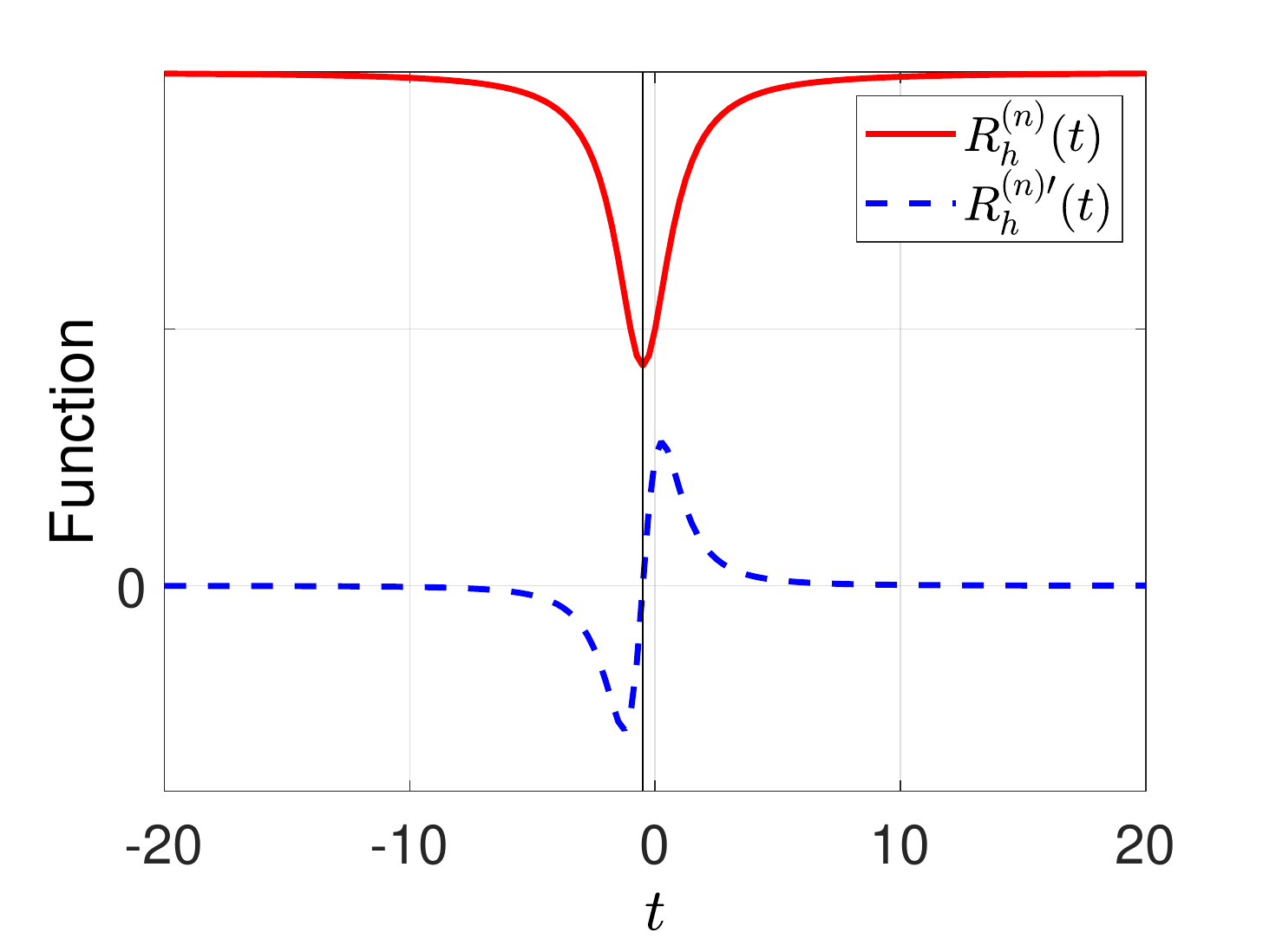}}
	\subfigure[ $\hat{e}_h^{(n)}<0$ and $\hat{d}_h^{(n)}=0$,]{	
		\includegraphics[width=0.475\columnwidth]{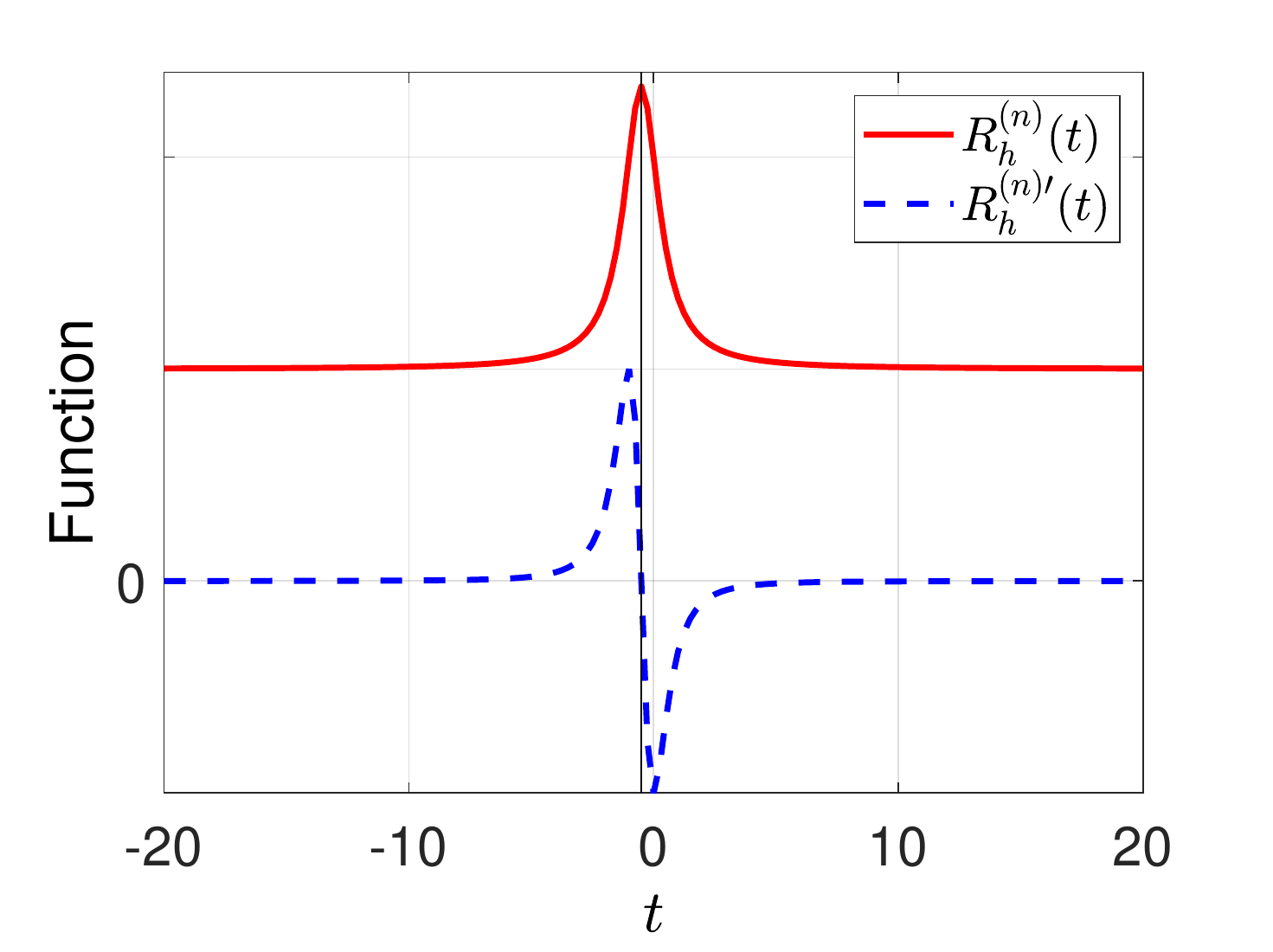}}
	\subfigure[$\hat{d}_h^{(n)}>0$]{
		\includegraphics[width=0.475\columnwidth]{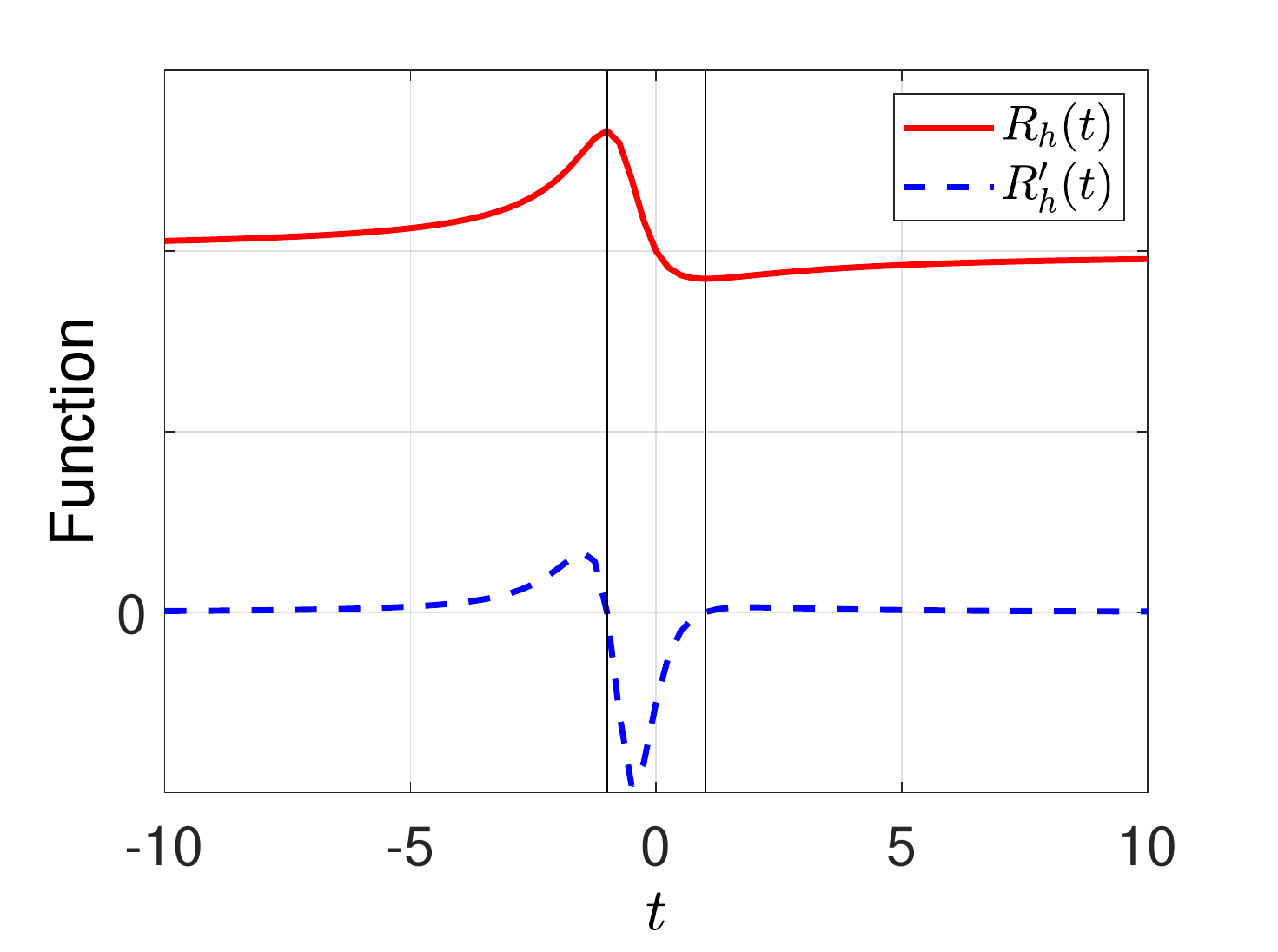}}
	\subfigure[$\hat{d}_h^{(n)}<0$]{
		\includegraphics[width=0.475\columnwidth]{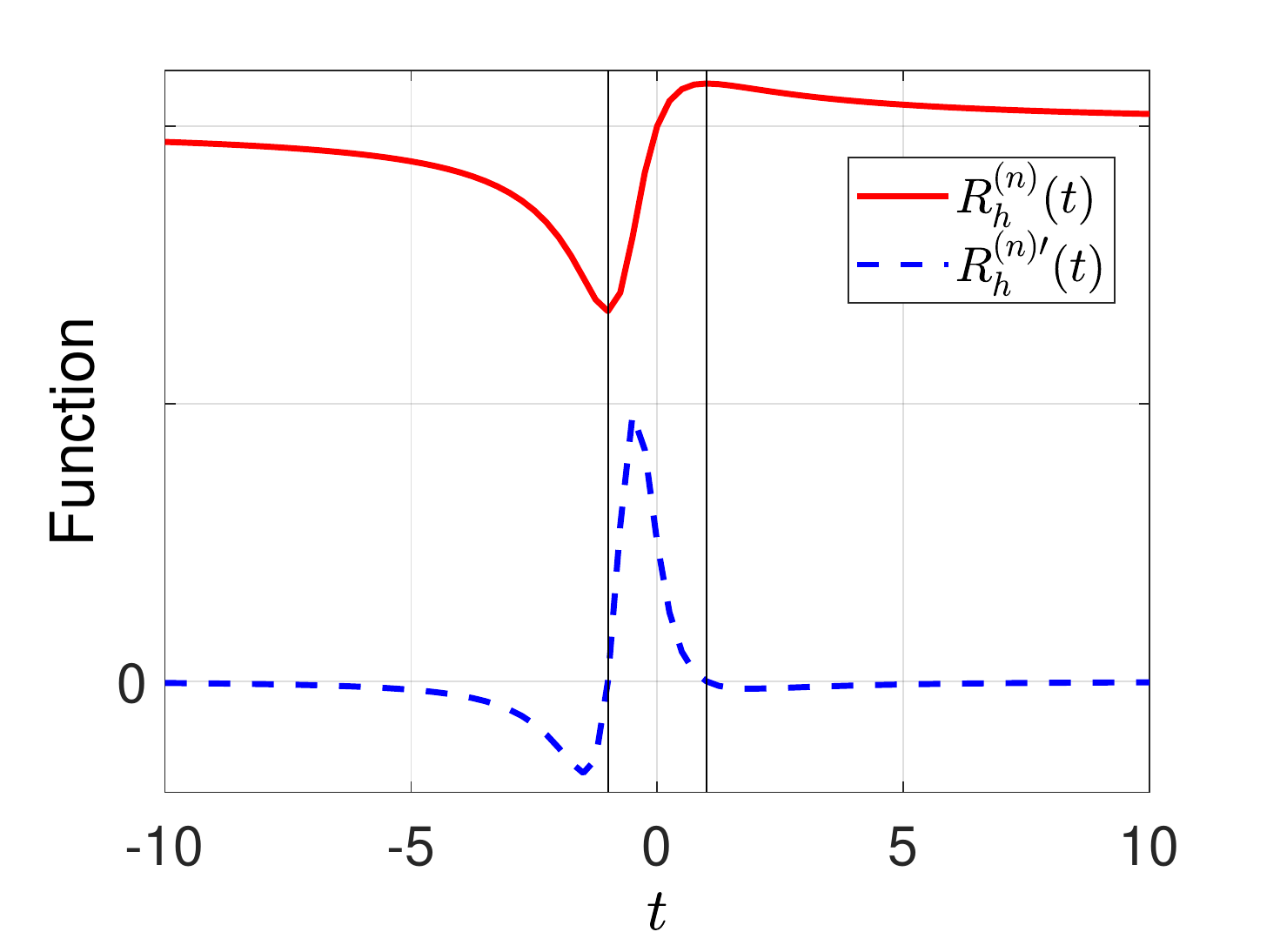}}
	\caption{ Red curve: $R_h^{(n)}(t)$; blue dashed curve: $R_h^{(n)'}(t)$; black line: the position of finite extreme point of $R_h^{(n)}(t)$.}\label{fig}
\end{figure}

Let us now focus on the feasible set evaluation, which is not empty being
${\by}^{(n)}$ a feasible solution to $\bar{\cal{P}}_p$.  Precisely, omitting the dependence on $d$ and $n$ for notational simplicity and considering the continuous phase case, the feasible set in the transformed domain is given by 
\begin{equation}
\mathcal{F}_\infty=\left(\displaystyle{\cap_{k=1}^K} \mathcal{S}_k\right)\cap\overline{\Psi}_\infty,
\end{equation}
where
\begin{equation}
\mathcal{S}_k = \left\{t: {\bar{a}_{k,h}^{(n)}t^2+\bar{b}_{k,h}^{(n)}t+\bar{c}_{k,h}^{(n)}\le0}\right\},
\end{equation}
and
${\bar\Psi}_{\infty}=[-\tan(\delta/2),\tan(\delta/2)]$.

As  shown in \cite{atsp},	$\mathcal{F}_{\infty}$  can be cast as
\begin{equation}
\mathcal{F}_{\infty}=\displaystyle{\cup_{i=1}^{K_\infty}}\left[l_i,u_i\right],
\end{equation}
with $l_i\le u_i< l_{i+1},i=1,\ldots,K_\infty-1\le K$, whose specific values depend on $\bar{a}_{k,h}^{(n)}$, $\bar{b}_{k,h}^{(n)}$, $\bar{c}_{k,h}^{(n)}$, and $\delta$, {and can be derived relying on  De Morgan law and ``union-find" algorithm.
	As a consequence, the feasible set of Problem ${\cal{P}}_p^{y^{(n)}_h}$ can be expressed as  
	\begin{equation}
	\bar{\mathcal{F}}_{\infty}=\displaystyle{\cup_{i=1}^{K_\infty}}\left[\hat{l}_i^{\infty},\hat{u}_i^{\infty}\right],
	\end{equation}
	where $\hat{l}_i^{\infty}=2\arctan(l_i)$, $\hat{u}_i^{\infty}=2\arctan(u_i)$.
	
	As to the discrete phase case, let us observe that
	\begin{equation}
	\begin{split}
	\bar{\mathcal{F}}_M&=\bar{\mathcal{F}}_{\infty}\cap{\Psi}_M\\
	&=\displaystyle{\bigcup_{i=1}^{K_\infty}}\left\{\left[\hat{l}_i^{\infty},\hat{u}_i^{\infty}\right]\cap{\Psi}_M\right\}.
	\end{split}
	\end{equation}
	Denoting by $\Gamma_i=\left[\hat{l}_i^{\infty},\hat{u}_i^{\infty}\right]\cap{\Psi}_M=\left\{\hat{l}_i,\hat{l}_i+\frac{2\pi}{M},\ldots,\hat{u}_i\right\}$ with
	\begin{eqnarray}
	\hat{l}_i= \left\lceil {\frac{{{\hat{l}_i^{\infty}}M}}{{2\pi }}} \right\rceil \frac{{2\pi }}{M},
	\hat{u}_i=	\left\lfloor {\frac{{{\hat{u}_i^{\infty}}M}}{{2\pi }}} \right\rfloor \frac{{2\pi }}{M},
	\end{eqnarray}
	$\bar{\mathcal{F}}_M$ can be recast as
	\begin{equation}\label{fm}
	\bar{\mathcal{F}}_M={{\mathop  \cup \limits_{\scriptstyle i = 1\atop
				\scriptstyle i: \;\hat{l}_i \le \hat{u}_i }^{K_\infty} \Gamma_i },}
	\end{equation}
	with the condition $\hat{l}_i \le \hat{u}_i$ ensuring that $\Gamma_i$ is not empty. Now actually, denoting by $K_M\le K_\infty$ the number of sets $\Gamma_i$ involved in the union operation of \eqref{fm} (i.e., the number of the actual disjoint closed sets), there exists an increasing mapping $i_t$: $t\in\{1,\ldots,K_M\}\rightarrow i_t\in\{1,\ldots,K_\infty\}$, such that $\bar{\mathcal{F}}_M={{\mathop  \cup \limits_{\scriptstyle t = 1}^{K_M} \Gamma_{i_t} }}$. Hence, denoting by $\Lambda_t=\Gamma_{i_t}=\left\{\hat{l}_t^M,\hat{l}_t^M+\frac{2\pi}{M},\ldots,\hat{u}_t^M\right\}$, $\bar{\mathcal{F}}_M$ can be rewritten as
	\begin{equation}
	\begin{split}
	\bar{\mathcal{F}}_M&={{\mathop  \cup \limits_{\scriptstyle t = 1}^{K_M} \Lambda_t }}.
	\end{split}
	\end{equation}
}	

\begin{lemma}\label{lemma}
	If $\hat{d}_h^{(n)}\neq0$, Eq. \eqref{deri} admits two roots.
\end{lemma}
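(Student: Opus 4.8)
The plan is to show that the lemma reduces to proving that the discriminant $\Delta_h^{(n)}=(\hat{e}_h^{(n)})^2-4\hat{d}_h^{(n)}\hat{f}_h^{(n)}$ of the numerator of $R_h^{(n)'}(t)$ in \eqref{deri} is strictly positive; granted this, the hypothesis $\hat{d}_h^{(n)}\neq 0$ makes $\hat{d}_h^{(n)}t^2+\hat{e}_h^{(n)}t+\hat{f}_h^{(n)}$ a genuine quadratic with two distinct real roots, which are precisely the $t_{g,h}^{(n)}$ and $t_{s,h}^{(n)}$ written in the text.

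First I would fix the behaviour of $R_h^{(n)}$ at infinity by ruling out a degenerate sub-case. Since $a_{2,h}^{(n)}t^2+b_{2,h}^{(n)}t+c_{2,h}^{(n)}>0$ for all $t$, either $a_{2,h}^{(n)}>0$, or $a_{2,h}^{(n)}=b_{2,h}^{(n)}=0$ with $c_{2,h}^{(n)}>0$; in the latter case $\hat{d}_h^{(n)}=a_{1,h}^{(n)}b_{2,h}^{(n)}-a_{2,h}^{(n)}b_{1,h}^{(n)}=0$, contradicting the hypothesis. Hence $a_{2,h}^{(n)}>0$, so $R_h^{(n)}$ is a rational function with nowhere-vanishing denominator, in particular differentiable on all of $\mathbb{R}$, and $\lim_{t\to+\infty}R_h^{(n)}(t)=\lim_{t\to-\infty}R_h^{(n)}(t)=a_{1,h}^{(n)}/a_{2,h}^{(n)}$, a common finite value.

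Next I would argue by contradiction. Assume $\Delta_h^{(n)}\le 0$. Because its leading coefficient $\hat{d}_h^{(n)}$ is nonzero, $\hat{d}_h^{(n)}t^2+\hat{e}_h^{(n)}t+\hat{f}_h^{(n)}$ keeps the sign of $\hat{d}_h^{(n)}$ for every $t$ (vanishing at a single point only when $\Delta_h^{(n)}=0$); dividing by the strictly positive square $\bigl(a_{2,h}^{(n)}t^2+b_{2,h}^{(n)}t+c_{2,h}^{(n)}\bigr)^2$ in \eqref{deri}, the derivative $R_h^{(n)'}$ has constant sign on $\mathbb{R}$ and vanishes at most once, so $R_h^{(n)}$ is strictly monotone on $\mathbb{R}$. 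But a strictly monotone function on $\mathbb{R}$ has $\inf_{\mathbb{R}}R_h^{(n)}<\sup_{\mathbb{R}}R_h^{(n)}$, whereas for a monotone function these two extrema are attained in the limit as $t\to-\infty$ and $t\to+\infty$, which we have just shown to coincide, a contradiction. Therefore $\Delta_h^{(n)}>0$, and \eqref{deri} admits two distinct real roots.

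I do not expect a substantive obstacle; the only care needed is bookkeeping. One must (i) retain the degenerate denominator case $a_{2,h}^{(n)}=b_{2,h}^{(n)}=0$, since it is precisely what guarantees a finite common limit at $\pm\infty$ for the argument, and (ii) ensure the contradiction also covers $\Delta_h^{(n)}=0$, where $R_h^{(n)'}$ merely touches zero at one point yet does not change sign, so strict monotonicity, and hence the contradiction, still holds. A purely algebraic alternative would set $R_h^{(n)}(t)=\lambda$, impose that the resulting quadratic in $t$ have a double root, and read off a quadratic in $\lambda$ whose discriminant is non-negative because the denominator $a_{2,h}^{(n)}t^2+b_{2,h}^{(n)}t+c_{2,h}^{(n)}$ is sign-definite, yielding $\Delta_h^{(n)}\ge 0$; I would prefer the monotonicity route for its brevity.
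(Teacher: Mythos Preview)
Your proposal is correct and follows essentially the same approach as the paper: assume by contradiction that $\Delta_h^{(n)}\le 0$, note that $\hat{d}_h^{(n)}\neq 0$ forces $a_{2,h}^{(n)}>0$ so that $R_h^{(n)}$ has a common finite limit $a_{1,h}^{(n)}/a_{2,h}^{(n)}$ at $\pm\infty$, and then derive a contradiction from the fact that the sign-definite numerator of $R_h^{(n)'}$ makes $R_h^{(n)}$ monotone. The only cosmetic difference is that the paper phrases the final step as ``monotone with equal endpoint limits $\Rightarrow$ constant $\Rightarrow$ $R_h^{(n)'}\equiv 0$ $\Rightarrow$ $\hat{d}_h^{(n)}=0$'', whereas you phrase it as ``strictly monotone $\Rightarrow$ $\inf<\sup$, contradicting equal limits''; these are equivalent.
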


\begin{proof}
	Let us proceed by contradiction and assume $\Delta_h^{(n)}\le0$. If $\hat{d}_h^{(n)}\neq0$, which entails that $a_{2,h}^{(n)}>0$, two situations may occur:
	\begin{enumerate}
		\item
		$\hat{d}_h^{(n)}>0$, which implies that $R_h^{(n)'}(t)\geq0$ $\forall t$; thus, $R_h^{(n)}(t)$ is a non-decreasing function and in particular,	$\mathop {\lim }\limits_{t \to -\infty } R_h^{(n)}(t)\le R_h^{(n)}(0)\le\mathop {\lim }\limits_{t \to  +\infty } R_h^{(n)}(t)$.
		\item $\hat{d}_h^{(n)}<0$, leading to $R_h^{(n)'}(t)\le0$ $\forall t$; hence,  $R_h^{(n)}(t)$ is  a non increasing function implying that $\mathop {\lim }\limits_{t \to  -\infty } R_h^{(n)}(t)\ge R_h^{(n)}(0)\ge\mathop {\lim }\limits_{t \to +\infty } R_h^{(n)}(t)$.
	\end{enumerate}
	Based on \eqref{rt}, 
	$\mathop {\lim }\limits_{t \to  + \infty } R_h^{(n)}(t) = \mathop {\lim }\limits_{t \to - \infty } R_h^{(n)}(t)=\frac{{a_{1,h}^{(n)}}}{{a_{2,h}^{(n)}}}$, that according to the aforementioned monotonicities leads to $R_h^{(n)}(t)=\frac{{a_{1,h}^{(n)}}}{{a_{2,h}^{(n)}}}$,  $\forall t$.
	Hence,  $R_h^{(n)'}(t)=0$, $\forall t$, namely, $\hat{d}_h^{(n)}=\hat{e}_h^{(n)}=\hat{f}_h^{(n)}=0$,  which contradicts the assumption $\hat{d}_h^{(n)}\neq0$.	
\end{proof}

\subsection{Proof of Proposition \ref{pro}}\label{p1}
\begin{proof}
	To solve Problem ${\cal{P}}_{\infty}^{y^{(n)}_h}$, it is convenient to distinguish among different situations according to the specific instances of $\bar{\mathcal{F}}_{\infty}$ and $\phi_{g,h}^{(n)}$.
	Evidently, if $\phi_{g,h}^{(n)}$ is feasible, i.e., $\phi_{g,h}^{(n)}\in\bar{\mathcal{F}}_{\infty}$, of course $y_h^\star=\phi_{g,h}^{(n)}$.
	
	Let us now suppose that $\phi_{g,h}^{(n)}$ is outside of $[\hat{l}_1^\infty,\hat{u}_{K_\infty}^\infty]$,
	\begin{itemize}
		\item if $-\pi<\phi_{g,h}^{(n)}\le\hat{l}_1^\infty$,  the optimal solution depends on the actual monotonicities of the objective function:
		\begin{itemize}
			
			\item if $\hat{d}_h^{(n)}=0$ and $\hat{e}_h^{(n)}<0$, or $\hat{d}_h^{(n)}<0$,
			$\chi\left(y_h;{\by}_{-h}^{(n)}\right)$ monotonically decreases over $\phi_{g,h}^{(n)}\le y_h<\pi$, which implies $y_h^\star=\hat{l}_1^\infty$;
			\item if  $\hat{d}_h^{(n)}>0$, 		
			$\chi\left(y_h;{\by}_{-h}^{(n)}\right)$ monotonically decreases over $\phi_{g,h}^{(n)}\le y_h\le \phi_{s,h}^{(n)}$, and then strictly increases over $\phi_{s,h}^{(n)}< y_h< \pi$, then $y_h^\star\in\{\hat{l}_1^\infty,\hat{u}_{K_\infty}^\infty\}$;
		\end{itemize} 
		\item if $\phi_{g,h}^{(n)}\ge\hat{u}_{K_\infty}^\infty$, a situation dual to $\phi_{g,h}^{(n)}\le\hat{l}_1^\infty$ occurs:
		\begin{itemize}
			\item if $\hat{d}_h^{(n)}=0$ and $\hat{e}_h^{(n)}<0$, or $\hat{d}_h^{(n)}>0$, 
			$\chi\left(y_h;{\by}_{-h}^{(n)}\right)$ monotonically increases over $-\pi\le y_h\le\phi_{g,h}^{(n)}$, which implies $y_h^\star=\hat{u}_{K_\infty}^\infty$;
			\item if  $\hat{d}_h^{(n)}<0$, 		
			$\chi\left(y_h;{\by}_{-h}^{(n)}\right)$ monotonically decreases over $-\pi\le y_h\le \phi_{s,h}^{(n)}$, and then increases over  $\phi_{s,h}^{(n)}< y_h\le \phi_{g,h}^{(n)}$, thus $y_h^\star\in\{\hat{l}_1^\infty,\hat{u}_{K_\infty}^\infty\}$.
		\end{itemize}
		\item  if $\phi_{g,h}^{(n)}=-\pi$, two situations may occur:
		\begin{itemize}
			\item if $\hat{d}_h^{(n)}=0$ and $\hat{e}_h^{(n)}=0$,  
			$\chi\left(y_h;{\by}_{-h}^{(n)}\right)$ is a constant for all $y_h$, thus $y_h^\star=\hat{l}_1^\infty$;
			\item if $\hat{d}_h^{(n)}=0$ and $\hat{e}_h^{(n)}>0$,		
			$\chi\left(y_h;{\by}_{-h}^{(n)}\right)$ monotonically decreases over $\phi_{g,h}^{(n)}\le y_h\le \phi_{s,h}^{(n)}$, and then increases over $\phi_{s,h}^{(n)}< y_h< \pi$, then $y_h^\star\in\{\hat{l}_1^\infty,\hat{u}_{K_\infty}^\infty\}$;
		\end{itemize}
	\end{itemize}
	Hence in this case, the optimal solution to Problem ${\cal{P}}_{\infty}^{y^{(n)}_h}$ is $y_h^\star=\arg\max\limits_{y_h\in\{\hat{l}_{1}^\infty,\hat{u}_{K_\infty}^\infty\}}\chi\left(y_h;{\by}_{-h}^{(n)}\right)$.
	
	Finally, if $\phi_{g,h}^{(n)}\in[\hat{l}_1^\infty,\hat{u}_{K_\infty}^\infty]$ but $\phi_{g,h}^{(n)}\notin\bar{\mathcal{F}}_{\infty}$, $\phi_{g,h}^{(n)}$
	belongs to $]\hat{u}_{r^\star}^\infty,\hat{l}_{r^\star+1}^\infty[$ with $r^\star$ the highest index such that $\hat{u}_{r^\star}^\infty	< \phi_{g,h}^{(n)}$. It follows that
	\begin{itemize}
		\item if $\hat{d}_h^{(n)}=0$ and $\hat{e}_h^{(n)}<0$, $\chi\left(y_h;{\by}_{-h}^{(n)}\right)$ monotonically increases over $-\pi\le y_h\le \phi_{g,h}^{(n)}$, then decreases over $\phi_{g,h}^{(n)}< y_h< \pi$, thus $y_h^\star\in\left\{\hat{u}_{r^\star}^\infty,\hat{l}_{r^\star+1}^\infty\right\}$;
		\item if $\hat{d}_h^{(n)}>0$, $\chi\left(y_h;{\by}_{-h}^{(n)}\right)$ monotonically increases over $\hat{l}_1^\infty\le y_h\le \phi_{g,h}^{(n)}$, and it is quasi-convex over $[\phi_{s,h}^{(n)},\pi[$, thus $y_h^\star\in\left\{\hat{u}_{r^\star}^\infty,\hat{l}_{r^\star+1}^\infty,\hat{u}_{K_\infty}^\infty\right\}$;
		\item if $\hat{d}_h^{(n)}<0$, $\chi\left(y_h;{\by}_{-h}^{(n)}\right)$ monotonically decreases over $\phi_{g,h}^{(n)}\le y_h\le \hat{u}_{K_\infty}^\infty$, and it is quasi-convex over $[-\pi,\phi_{g,h}^{(n)}]$,  thus $y_h^\star\in\left\{\hat{l}_1^\infty,\hat{u}_{r^\star}^\infty,\hat{l}_{r^\star+1}^\infty\right\}$.
	\end{itemize}
	As a result, in general terms, $y_h^\star\in\left\{\hat{l}_1^\infty,\hat{u}_{r^\star}^\infty,\hat{l}_{r^\star+1}^\infty,\hat{u}_{K_\infty}^\infty\right\}$.
	
\end{proof}
\subsection{Proof of Corollary \ref{fd}}\label{p2}
\begin{proof}
	Two situations may occur: there exists
	an index $q^\star\in\{1,\ldots,K_M\}$ satisfying 
	$\hat{l}_{q^\star}^M\le\phi_{g,h}^{(n)}\le\hat{u}_{q^\star}^M$ or such an index does not exist.
	
	As to the former case,
	$\phi^M_l=\left\lfloor {\frac{{{\phi_{g,h}^{(n)}}M}}{{2\pi }}} \right\rfloor \frac{{2\pi }}{M}$ and $\phi^M_u=\left\lceil {\frac{{{\phi_{g,h}^{(n)}}M}}{{2\pi }}} \right\rceil \frac{{2\pi }}{M}$ represent the  closest feasible points from below and from above, respectively, to  $\phi_{g,h}^{(n)}$. To proceed further, let us consider the optimal solution to the relaxed problem
	
	\begin{equation}\label{rp1}
	\left\{ \begin{array}{ll}
	\displaystyle{\max_{y_h}} & \chi\left(y_h;{\by}_{-h}^{(n)}\right)\\
	\mbox{s.t.}  &   y_h \in [\hat{l}_{1}^M,\phi^M_l]\bigcup[\phi^M_u,\hat{u}_{K_M}^M].
	\end{array} \right.
	\end{equation} 
	It follows that
	\begin{itemize}
		\item 
		if $\phi_{g,h}^{(n)}>\phi_{s,d}^{(n)}$ (i.e., either $\hat{d}_h^{(n)}=0$ and $\hat{e}_h^{(n)}<0$, or $\hat{d}_h^{(n)}<0$), $\chi\left(y_h;{\by}_{-h}^{(n)}\right)$ monotonically decreases over $\phi_{g,h}^{(n)}\le y_h< \pi$, thus $\phi^M_u$ is the maximum point over $[\phi^M_u,\pi[$; 
		
		\begin{itemize}
			\item	if $\phi_{s,d}^{(n)}\le\hat{l}_{1}^M$, $\chi\left(y_h;{\by}_{-h}^{(n)}\right)$ monotonically increases over $\hat{l}_{1}^M\le x_d\le\phi^M_l$, thus $\phi^M_l$ is the maximizer over 		
			$[\hat{l}_{1}^M,\phi^M_l]$, implying that the optimal solution to \eqref{rp1} is $y_h^\star=\arg\max\limits_{y_h\in\left\{\phi^M_l,\phi^M_u\right\}}\chi\left(y_h;{\by}_{-h}^{(n)}\right)$;
			\item	instead if $-\pi<\hat{l}_{1}^M<\phi_{s,d}^{(n)}$, $\chi\left(y_h;{\by}_{-h}^{(n)}\right)$ monotonically decreases over $-\pi\le x_d\le\hat{l}_{1}^M$, thus $\chi(\hat{l}_{1}^M;{\by}_{-h}^{(n)})\le\chi(-\pi;{\by}_{-h}^{(n)})=\chi(\pi;{\by}_{-h}^{(n)})\le\chi(\phi^M_u;{\by}_{-h}^{(n)})$, implying again that the optimal solution is $y_h^\star=\arg\max\limits_{y_h\in\left\{\phi^M_l,\phi^M_u\right\}}\chi\left(y_h;{\by}_{-h}^{(n)}\right)$;	
		\end{itemize}
		\item if $\phi_{g,h}^{(n)}<\phi_{s,d}^{(n)}$, following a line of reasoning similar to that for $\phi_{g,h}^{(n)}>\phi_{s,h}^{(n)}$, it can be concluded that $y_h^\star=\arg\max\limits_{y_h\in\left\{\phi^M_l,\phi^M_u\right\}}\chi\left(y_h;{\by}_{-h}^{(n)}\right)$.
	\end{itemize}

	Let us now focus on the latter situation. In this case let us consider the following relaxed problem,  
	\begin{equation}\label{rp}
	\left\{ \begin{array}{ll}
	\displaystyle{\max_{y_h}} & \chi\left(y_h;{\by}_{-h}^{(n)}\right)\\
	\mbox{s.t.}  &   y_h \in \displaystyle{\cup_{i=1}^{K_M}}\left[\hat{l}_i^M,\hat{u}_i^M\right],
	\end{array} \right.
	\end{equation}
	
	Now, according to Proposition III.1, 
	\begin{itemize}
		\item if $\phi_{g,h}^{(n)}\notin[\hat{l}_1^M,\hat{u}_1^M]$, the optimal solution to Problem \eqref{rp} is  $y_h^\star=\max\limits_{y_h\in\{\hat{l}_1^M,\hat{u}_{K_M}^M\}}\chi\left(y_h;{\by}_{-h}^{(n)}\right)$.
		\item otherwise, denoting by $r^\star=\max\limits_{\hat{u}_{r}^M
			< \phi_{g,h}^{(n)}} r$, the optimal solution to \eqref{rp} is  $y_h^\star=\arg\max\limits_{y_h \in\{\hat{l}_1^M,\hat{u}_{r^\star}^M,\hat{l}_{r^\star+1}^M,\hat{u}_{K_M}^M\}}\chi\left(y_h;{\by}_{-h}^{(n)}\right)$.
	\end{itemize}
	Being the optimal solution to the relaxed versions of ${\cal{P}}_M^{y_h^{(n)}}$ in Problems \eqref{rp1} and \eqref{rp} feasible to Problem ${\cal{P}}_M^{y_h^{(n)}}$, the proof is concluded.
\end{proof}

\subsection{Derivation of the Surrogate Function $\tilde{f}_1(\bx_1;\bq^{(l)})$ }\label{33}
Let us observe that the objective function of Problem (30) restricted to $\bq_1={\bx}_1$ is given by
\begin{equation*}\label{ia}
	\hat{f}([\bx_1^T,\bq^{T}_{2}]^T)=\frac{{{{\bx}_1^\dag}{{ \bM}_1^{\left(\bq_2\right)}}{{\bx}_1}}}{{{{{\bx}_1^\dag}}{{\overline \bM}_2^{(\bq_2)}}{{\bx}_1} }} -\bar\beta{\bx}_1^\dag{\bar\bR}{\bx}_1 
\end{equation*}
where ${{\overline \bM}_2^{(\bq_2)}}={\bM}_2^{\left(\bq_2 \right)}+{\vartheta^{\left( \bq_2\right)} }\bI_N/N$, and ${\bar\bR}=\diag\{\bs_0\}^\dag\bR\diag\{\bs_0\}$, which can be expressed as $\hat{f}([\bx_1^T,\bq^{T}_{2}]^T)=g(\bx_1,\bx_1\bx_1^\dag,\bq_{2})-\bar\beta\bx_1^\dag{\bar\bR}\bx_1 $, with $g(\bx_1,\bx_1\bx_1^\dag,\bq_{2})=\frac{{{{{\bx}_1^\dag}}{{ \bM}_1^{\left(\bq_2\right)}}{\bx_1}}}{{\tr\left( {{\overline \bM}_2^{\left(\bq_2\right)}}{\bx_1\bx_1^\dag} \right) }}$. 

Being $g(\bx_1,\bX_1,\bq_2)$ jointly convex with respect to $\bx_1$ and $\bX_1$,  where ${ \bM}_1^{\left(\bq_2\right)},{ \bM}_2^{\left(\bq_2\right)},\bX_1$ are positive definite matrices, a tight expansion is provided by its first-order approximation around any given point $(\hat\bq_1,\hat{\bQ}_1)$, which yields \cite{b2}
\begin{align*}
	\begin{split}
		&u\left(\bx_1,\bX_1;\hat\bq_1,\hat{\bQ}_1,\bq_2\right)\\= &2\Re\left\{\frac{{{{\hat\bq_1}^{\dag}}{{ \bM}_1^{\left(\bq_2 \right)}}{\bx_1}}}{{\tr\left( {{\overline \bM}_2^{(\bq_2)}}{\hat{\bQ}_1} \right) }}\right\}-\frac{{{{\hat\bq_1}^{\dag}}{{ \bM}_1^{\left(\bq_2 \right)}}{\hat\bq_1}}}{{\tr\left( {{\overline \bM}_2^{(\bq_2)}}{\hat{\bQ}_1} \right)^2 }}
		\tr\left({{\overline \bM}_2^{(\bq_2)}} \bX_1 \right).
	\end{split}
\end{align*}
Denoting by $\bar{u}(\bx_1;\bq_1,\bq_2)=u(\bx_1,\bx_1\bx_1^\dag;\bq_1,\bq_1\bq_1^\dag,\bq_{2})$, it follows that
\begin{itemize}
	\item $\hat{f}([\bx_1^T,\hat{\bq}^{T}_{2}]^T)\ge \bar{u}(\bx_1;\hat{\bq}_1,\hat{\bq}_{2})-\bar\beta\bx_1^\dag{\bar\bR}\bx_1,\forall\bx_1,\hat\bq_1,\hat{\bq}_{2};$
	\item $\hat{f}([\bx_1^T,\hat{\bq}^{T}_{2}]^T)= \bar{u}(\bx_1;\bx_1,\hat{\bq}_{2})-\bar\beta{\bx}_1^\dag{\bar\bR}{\bx}_1$.
\end{itemize}

Now, letting  $v(\bx_1,\hat{\bq}_1,\hat{\bq}_{2})=\bar{u}(\bx_1;\hat{\bq}_1,\hat{\bq}_{2})-\bar\beta\bx_1^\dag{\bar\bR}\bx_1$, after some algebraic manipulations it follows that
\begin{equation*}
	\begin{split}
		v(\bx_1;\hat{\bq}_1,\hat{\bq}_{2})
		=-{\bx}_1^\dag\bP^{(\hat{\bq}_1,\hat{\bq}_{2})}{\bx}_1+\Re\left\{\bv^{{(\hat{\bq}_1,\hat{\bq}_{2})}\dag}{\bx}_1\right\},
	\end{split}
\end{equation*}
where 
\begin{equation*}\label{p}
	\begin{split}
		\bP^{(\hat{\bq}_1,\hat{\bq}_{2})}=\frac{{{\hat{\bq}_1^{\dag}}{{ \bM}_1^{\left(\hat{\bq}_{2}\right)}}{\hat{\bq}_1}}}{{\left( {\hat{\bq}_1}^{\dag}{{\overline \bM}_2^{(\hat{\bq}_{2})}}{\hat{\bq}_1} \right)^2 }}{{\overline \bM}_2^{(\hat{\bq}_{2})}}+{\bar\beta}{\bar\bR},
	\end{split}
\end{equation*}
\begin{equation*}
	\bv^{(\hat{\bq}_1,\hat{\bq}_{2})}=2\frac{{{{ \bM}_1^{\left(\hat{\bq}_{2}\right)}}{\hat{\bq}_1}}}{{ \hat{\bq}_1^{\dag}{{\overline \bM}_2^{(\hat{\bq}_{2})}}\hat{\bq}_1 }},
\end{equation*}
Finally, being
\begin{itemize}
	\item ${\bx}_1^\dag\bP^{(\hat{\bq}_1,\hat{\bq}_{2})}{\bx}_1\le\lambda{\bx}_1^\dag{\bx}_1+2\Re\left\{{\bx}_1^\dag\left(\bP^{(\hat{\bq}_1,\hat{\bq}_{2})}-\lambda\bI \right)\hat{\bq}_1 \right\}-
	\hat{\bq}_1 ^{\dag} \left(\bP^{(\hat{\bq}_1,\hat{\bq}_{2})}-\lambda\bI \right)\hat{\bq}_1$, $\forall \bx_1,\hat\bq_1\in{\cal{D}}_p^N$ and $\hat{\bq_2}\in\mathbb{C}^N$, with  equality at ${\bx}_1=\hat{\bq}_1$ and $\lambda=\lambda_{\max}\left( \bP^{(\hat{\bq}_1,\hat{\bq}_{2})}\right)$\cite{b2},
	\item $\left\| {\bx}_1\right\| ^2=N$,
\end{itemize}
it yields
\begin{equation}
\tilde{f}_1({\bx}_1;\hat\bq)=\Re\left\{\bz^{(\hat{\bq}_1,\hat{\bq}_{2})\dag}{\bx}_1\right\}+ r^{(\hat{\bq}_1,\hat{\bq}_{2})}, 
\end{equation}
with $\hat\bq=[\hat{\bq}_1^T,\hat{\bq}_2^T]^T$, and
\begin{eqnarray}\label{zl}
\bz^{(\hat{\bq}_1,\hat{\bq}_{2})}&=&2\left(\lambda\bI-\bP^{(\hat{\bq}_1,\hat{\bq}_{2})} \right)\hat{\bq}_1 +\bv^{(\hat{\bq}_1,\hat{\bq}_{2})},\\
r^{(\hat{\bq}_1,\hat{\bq}_{2})}&=&\hat{\bq}_1 ^{\dag} \left(\bP^{(\hat{\bq}_1,\hat{\bq}_{2})}-\lambda\bI \right)\hat{\bq}_1 -\lambda N,
\end{eqnarray}
which is a valid surrogate function of $\hat{f}([\bq_1^T,\bq^{T}_{2}]^T)$, since it satisfies  the properties (P1)-(P3).

\subsection{The Solution of $q_i^{(l+1)\star}$ in HIVAC}\label{hivac}
As to the continuous phase codes, candidate optimal solutions are the boundary points satisfying the first order optimality condition, i.e., nulling the derivative of the objective function.

To proceed further, let us observe that 
\begin{equation} \label{64}
\left\{ \begin{array}{ll}
\displaystyle{ \max _{x_{i}} } &\tilde f_i\left( x_{i};\bq^{(l)}\right)\\
\mbox{s.t.}  
&x_{i} \in {\cal{D}}_p
\end{array} \right. 
\end{equation}
is equivalent to 
\begin{equation} \label{65}
\left\{ \begin{array}{ll}
\displaystyle{ \max _{\phi_{i}} } &h(\phi_i)\\
\mbox{s.t.}  
&\phi_i \in {\Psi}_p
\end{array} \right. 
\end{equation}
with $h(\phi_i)=\tilde f_i\left( e^{j\phi_i};\bq^{(l)}\right)$. Indeed, denoting by $\phi_i^\star$ the optimal solution to \eqref{65}, $e^{j\phi_i^\star}$ is an optimal solution to \eqref{64}. Let us now observe that 
\begin{equation*}
	\begin{split}
		\frac{dh(\phi_i)}{d\phi_i}&=\frac{d h(2\arctan(\tan(\phi_i/2)))}{d \phi_i}\\
		&=\frac{d \bar{f}(\xi)}{d \xi}\frac{d \xi}{d \phi_i}\\&=\frac{d \bar{f}(\xi)}{d \xi}\frac{1}{2\cos^2(\phi_i/2)},
	\end{split}
\end{equation*} 
with $\bar{f}(\xi)=h(2\arctan(\xi))$ and ${\xi=\tan(\phi_i/2)}$, whose closed form expression is
\begin{equation}\label{ddd1}
\bar f(\xi)=\displaystyle{\frac{u_1^{(l)}\xi^2+v_1^{(l)}\xi+w_1^{(l)}}{u_2^{(l)}\xi^2+v_2^{(l)}\xi+w_2^{(l)}}+\frac{u_3^{(l)}\xi^2+v_3^{(l)}\xi-u_3^{(l)}}{1+\xi^2} } ,
\end{equation}
with\footnote{Note that $\bar{f}(-\infty)=\bar{f}(\infty)=\tilde{f}(e^{-j\pi};\bq^{(l)})=h(-\pi)$.}
$u_1^{(l)}=b_s^{(l)}-\Re\{a_s^{(l)}\}$, $u_2^{(l)}=d_s^{(l)}-\Re\{c_s^{(l)}\}$, $u_3^{(l)}=-\Re\{f_s^{(l)}\}$, $v_1^{(l)}=-2\Im\{a_s^{(l)}\}$, $v_2^{(l)}=-2\Im\{c_s^{(l)}\}$, $v_3^{(l)}=-2\Im\{f_s^{(l)}\}$, $w_1^{(l)}=b_s^{(l)}+\Re\{a_s^{(l)}\}$, $w_2^{(l)}=d_s^{(l)}+\Re\{c_s^{(l)}\}$.

As a result, the stationary points, belonging to $[-\pi,\pi[$, of the objective in \eqref{65} can be computed solving the equation
\begin{equation}\label{fde}
\begin{split}
\frac{d \bar{f}(\xi)}{d \xi}=&\frac{\bar{d}^{(l)}\xi^2+\bar{e}^{(l)}\xi+\bar{r}^{(l)}}{(u_2^{(l)}\xi^2+v_2^{(l)}\xi+w_2^{(l)})^2}-\frac{v_3^{(l)}\xi^2-4u_3^{(l)}\xi-v_3^{(l)}}{(\xi^2+1)^2}\\
=&0.
\end{split}
\end{equation}
where $\bar{d}^{(l)}=u_1^{(l)}v_2^{(l)}-u_2^{(l)}v_1^{(l)}$, $\bar{e}^{(l)}=2(u_1^{(l)}w_2^{(l)}-u_2^{(l)}w_1^{(l)})$, $\bar{r}^{(l)}=v_1^{(l)}w_2^{(l)}-v_2^{(l)}w_1^{(l)}$.

After some algebraic manipulation, it is not difficult to show that \eqref{fde} is tantamount to solving
\begin{equation}\label{6order}
\zeta_6\xi^6+ \zeta_5\xi^5+\zeta_4\xi^4+\zeta_3\xi^3+\zeta_2\xi^2+\zeta_1\xi^1+\zeta_0=0,
\end{equation}
where $\zeta_6=-u_2^{(l)2}v_3^{(l)}+\bar{d}^{(l)}$, $\zeta_5=\bar{e}^{(l)}+4u_2^{(l)2}u_3^{(l)}-2u_2^{(l)}v_2^{(l)}v_3^{(l)}$, $\zeta_4=\bar{r}^{(l)}+2\bar{d}^{(l)}+u_2^{(l)2}v_3^{(l)}+8u_2^{(l)}u_3^{(l)}v_2^{(l)}-v_3^{(l)}(2u_2^{(l)}w_2^{(l)}+v_2^{(l)2})$, $\zeta_3=2\bar{e}^{(l)}+2u_2^{(l)}v_2^{(l)}v_3^{(l)}+4u_3^{(l)}(2u_2^{(l)}w_2^{(l)}+v_2^{(l)2})-2v_2^{(l)}v_3^{(l)}w_2^{(l)}$, $\zeta_2=2\bar{r}^{(l)}+\bar{d}^{(l)}+v_3^{(l)}(2u_2^{(l)}w_2^{(l)}+v_2^{(l)2})+8v_2^{(l)}w_2^{(l)}u_3^{(l)}-v_3^{(l)}w_2^{(l)2}$, $\zeta_1=\bar{e}^{(l)}+2v_2^{(l)}w_2^{(l)}v_3^{(l)}+4u_3^{(l)}w_2^{(l)2}$, $\zeta_0=\bar{r}^{(l)}+w_2^{(l)2}v_3^{(l)}$.
The  real roots of \eqref{6order} are at most six and can be obtained by Matlab ``roots" function. 

Hence, denoting by $ \mathcal{T}_\infty=\{\xi_1,\ldots,\xi_{T_\infty}\},T_\infty\le 6$, the set of real roots of \eqref{6order} belonging to $[-\tan(\delta/2),\tan(\delta/2)]$ and
\begin{equation}\label{ma}
\bar{\mathcal{T}}_{\infty,i}^{(l)}=\{-\delta,\delta,2\arctan(\xi_1),\ldots,2\arctan(\xi_{T_\infty})\},
\end{equation}
the optimal solution to \eqref{65} is given by
\begin{equation}\label{dd1}
\phi_i^\star=\arg\max_{\phi\in\bar{\mathcal{T}}_{\infty,i}^{(l)}}\tilde {f}_i(e^{j\phi};\bq^{(l)}).
\end{equation}

For the finite alphabet case, 
candidate  optimal solutions are the feasible points closest to the stationary points from 
below and from above, respectively.

{Denoting  by $\mathcal{T}_M=\{m_1,\cdots,m_{T_M}\}$, $T_M\le6$, the real roots of \eqref{6order} with $m_i\in[\tan(\frac{\pi\alpha_\epsilon}{M}),\tan(\frac{\pi(\alpha_\epsilon+\omega_\epsilon-1)}{M})],i=1,\ldots,T_M$,  the stationary points of $h(\phi_i)$ are $\{\iota_1,\ldots,\iota_{T_M}\}$ with $\iota_i=2\arctan(m_i),i=1,\ldots,T_M$.
	Then,  the sets of feasible points in \eqref{65} closest
	to $\iota_i,i=1,\ldots,T_M$ from 
	below and from above are $\mathcal{T}_1=\{\theta_i,i=1,\cdots,T_M|\theta_i=\left\lfloor {\frac{{\iota_i}M}{{2\pi }}} \right\rfloor \frac{{2\pi }}{M}\}$, and $\mathcal{T}_2=\{\bar{\theta}_i,i=1,\ldots,T_M|\bar{\theta}_i=\left\lceil {\frac{{\iota_i}M}{{2\pi }}} \right\rceil \frac{{2\pi }}{M}\}$, respectively.
	
}
Hence, letting
\begin{equation}\label{ma2}
\bar{\mathcal{T}}_{M,i}^{(l)}=\{\frac{2\pi\alpha_\epsilon}{M},\frac{2\pi(\alpha_\epsilon+\omega_\epsilon-1)}{M},\mathcal{T}_1\bigcup\mathcal{T}_2\},
\end{equation} 
the optimal solution to \eqref{65} for the finite alphabet case is given by\footnote{  Note that if ${\omega_{\epsilon}}\le 14$,   the direct search may require a lower computational complexity.}
\begin{equation}\label{dd2}
\phi_i^\star=\arg\max_{\phi\in\bar{\mathcal{T}}_{M,i}^{(l)}}\tilde {f}_i(e^{j\phi};\bq^{(l)}).
\end{equation}


%
\bibliographystyle{IEEEbib}
\bibliography{IEEEabrv,biblio_paper2}
\end{document}